\documentclass[12pt]{amsart}
\usepackage{amsmath,amscd,amsthm,amsfonts, amssymb,amsxtra, mathrsfs, amsrefs}
\usepackage{hyperref}
 \hypersetup{colorlinks=true,citecolor=blue,linkcolor=olive}
\usepackage{accents}
\usepackage{caption}
\usepackage{tikz}
\usetikzlibrary{patterns,calc,decorations.markings, graphs, arrows,shapes}
\usepackage[us,12hr]{datetime}
\usepackage[all]{xy} \SelectTips{cm}{}

   \usepackage{graphicx}
  \usepackage[margin=1.75in]{geometry}
  \usepackage[pagewise, displaymath, mathlines]{lineno}
 % \linenumbers
  
\usepackage{nicematrix}
   \usepackage[shortlabels]{enumitem}
  
  \makeatletter
\renewcommand{\fnum@figure}{Fig.~\thefigure}
\makeatother
\CDat
%\subjclass{Primary: 57R91; Secondary: 57R85, 55N45}
%\setlength{\textwidth}{6in}
%\setlength{\hoffset}{-.55in}
%\setlength{\voffset}{-2.25cm}
%%%       Theorem environments
\newtheorem{thm}{Theorem}[section]  
\newtheorem*{un-no-thm}{Theorem}
% Numbered within each chapter
\newtheorem{cor}[thm]{Corollary}     % Numbered along with thm
\newtheorem{lem}[thm]{Lemma}         % Numbered along with thm
\newtheorem{prop}[thm]{Proposition}

% Numbered along with thm
\newtheorem{bigthm}{Theorem}
  % Number as %"Theorem A."

\theoremstyle{definition} 
\newtheorem{defn}[thm]{Definition}   % Numbered along with thm

\theoremstyle{definition}
   % Numbered along with thm

\theoremstyle{definition}
\theoremstyle{remark}
\newtheorem{rem}[thm]{Remark}

\newtheorem{notation}[thm]{Notation}
\newtheorem*{acks}{Acknowledgements}

\newtheorem*{out}{Outline}

\newtheorem*{intro-rem}{Remark}
\newtheorem*{intro-rems}{Remarks}

\newtheorem{ex}[thm]{Example}

\DeclareMathOperator{\spec}{Spec}

\DeclareMathOperator{\Span}{span}

\DeclareMathOperator{\self}{End}

\DeclareMathOperator{\U}{U}
\DeclareMathOperator{\Or}{O}
\DeclareMathOperator{\SO}{SO}

\DeclareMathOperator{\GL}{GL}

\DeclareMathOperator{\PGL}{PGL}

\DeclareMathOperator{\PU}{PU}

\DeclareMathOperator{\tr}{tr}

% to make the notation % environment unnumbered
% to make the terminology
% environment unnumbered

%\numberwithin{equation}{section}d
%\numberwithin{equation}{theorem}

%\setcounter{tocdepth}{2} 
 %sections and subsections
\setcounter{section}{-1}
\counterwithin{figure}{section}

\def\:{\colon\!}

\def\Bbb{\mathbb}
\def\scr{\mathscr}

\def\Lcirc{\accentset{\circ}{\scr L}}

\begin{document}

\title[Rational invariants of quantum systems of $n$-qubits]{On the rational invariants of quantum systems
of $n$-qubits}
\date{\today} 
%\date{\today, \currenttime} 
\author[L.~Candelori, V.~Y.~Chernyak, and J.~R.~Klein]{
Luca Candelori$^{1}$,~Vladimir Y. Chernyak$^{1,2}$,~John R. Klein$^{1}$
}

\thanks{\hskip -.17in $^{1}$\textsc{\tiny Department of Mathematics, Wayne State University, Detroit, MI 48201, USA}\\
$^{2}$\textsc{\tiny Department of Chemistry, Wayne State University, Detroit, MI 48201, USA}}

\subjclass[2020]{Primary: 81P40, 81P42, 13A50, Secondary: 14L24}

\begin{abstract}  For an $n$-qubit system, a rational function on the space of mixed states 
which is invariant with respect to the action of the group of  local symmetries
may be viewed as a detailed measure of entanglement.
We give two proofs that the field of  all  such invariant rational functions
is purely transcendental over the complex numbers and has transcendence degree $4^n - 3n-1$. An explicit transcendence basis is  also exhibited.
 \end{abstract}

\maketitle
\setcounter{tocdepth}{1}
{\color{olive}
\tableofcontents}
\addcontentsline{file}{sec_unit}{entry}

\section{Introduction} \label{sec:intro}
A physical system is often described in terms of three notions: states, observables and dynamics. 
 We shall consider here the case of quantum systems consisting of $n$  parties,
in which each party is in possession of a  {\it qubit}, i.e., a two dimensional complex vector space.
The qubits are not permitted to interact.
Stated differently, each qubit is allowed to evolve independently of the other qubits by means of its owner.
Mathematically, this means that the time evolution of the system is a one-parameter family of {\it local} unitary transformations.

From the point-of-view of this paper, {\it entanglement} occurs when 
one considers the problem of what the parties 
can or cannot do to transform one state into another one. A solution to this problem amounts to  describing
the orbit structure of the state space with respect to action of the group of local unitary transformations, the latter
which is  isomorphic to a product of $n$-copies of the Lie group $\U(2)$. 
Consequently, any reasonable measure of entanglement should be a function on the space of orbits.
In particular, the most detailed measure of entanglement is given by considering functions
on the state space which are invariant with respect to the group of local unitary transformations.

In addition, one observes that the action of the local unitary group is trivial on the subgroup given by $n$-copies
of the group of uni-modular transformations. Hence, the orbit structure is defined by the action
of an $n$-fold product of projective unitary groups $\PU(2) \cong \SO_3(\Bbb R)$.
However, from the viewpoint of algebraic geometry, it is usually simpler to work over the algebraically closed field of complex numbers.
In this instance, one needs to complexify the space of states as well as the symmetry group, resulting in
an action of an $n$-fold product of copies of $\SO_3(\Bbb C) \cong \PGL_2(\Bbb C)$ acting the state space of 
trace one endomorphisms of the  tensor product of the $n$-qubits. If one wishes to recover
measures of entanglement over the real numbers, it is
necessary to keep track of the real structure on the complexification (for an alternative approach, see \cite{CVKR-rationality}).

\subsection{The rationality problem} If one restricts the class of invariant functions to be polynomials, or more generally,
rational functions, then the problem of determining such measures of entanglement
may be attacked using  the machinery of  geometric invariant theory.

Suppose $G$ is a reductive algebraic group and $W$ is a finite dimensional $G$ representation. Then a fundamental 
question  is whether or not the field $\Bbb C(W)^G$, i.e.,
the $G$-invariant rational functions on $W$, is a purely transcendental extension of $\Bbb C$. 
If it is, then we say $\Bbb C(W)^G$ is {\it purely rational}.
If $G$ is connected, then
a surprising fact  is that there is no known example of a $G$-module $W$ for which  $\Bbb C(W)^G$ not purely rational
\cite[p.~6]{böhning2009rationality}.

In this paper we solve the rationality
problem in the affirmative in the case when $W$ is space of (mixed) states associated with a quantum system on $n$-qubits.

\subsection{The main result}
Let $V_1,\dots,V_n$ denote $n$-qubits and let 
\[
\scr L_n := \scr L_n(V_1,\dots,V_n)
\] denote
the  affine space consisting of all trace one endomorphisms of the tensor product
$V_1 \otimes \cdots \otimes V_n$.  This is the space of {\it L-states} (or {\it mixed states}) on $n$-qubits.

For a vector space $V$, let $\PGL(V)$ denote the 
associated projective linear group.
Then the Lie group
\[
G = \PGL(V_1) \times \cdots \times \PGL(V_n)
\] 
acts on $\scr L_n$ by conjugating endomorphisms. 

The following, which is our main result, settles the rationality problem for $G$ acting on $\scr L_n$.
  
  \begin{bigthm} \label{bigthm:main} The field $\Bbb C(\scr L_n)^G$ 
  is purely rational and has transcendence
  degree $4^n - 3n -1$ over the complex numbers.
  \end{bigthm}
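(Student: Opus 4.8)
The plan is to compute $\Bbb C(\scr L_n)^G$ by exhibiting an explicit slice for the $G$-action, after first removing the ``passive'' coordinates with a no-name argument. Write $\self(V_i)=\Bbb C\cdot\id\oplus\sLie(V_i)$ for the decomposition of the $\PGL(V_i)$-module of endomorphisms into its trivial summand and the adjoint representation, and use the isomorphism $\PGL(V_i)\cong\SO_3(\Bbb C)$ under which $\sLie(V_i)$ is the standard $3$-dimensional orthogonal representation with quadratic form $Q(Y)=\tfrac12\tr(Y^2)$. For $A\in\scr L_n$ let $p_i(A)\in\sLie(V_i)$ be the traceless part of the partial trace of $A$ over all tensor factors except the $i$-th; it transforms by conjugation under $\PGL(V_i)$ and is fixed by $\PGL(V_j)$ for $j\neq i$. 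Assuming $n\ge 2$, fix for each $i$ an index $j(i)\neq i$ and let $q_i(A)\in\sLie(V_i)$ be the traceless part of the partial trace over all factors but the $i$-th of the product of $A$ with the endomorphism that is $p_{j(i)}(A)$ in the $j(i)$-th slot and the identity elsewhere; the $j(i)$-th factor is contracted away, so $q_i$ is again $\PGL(V_i)$-equivariant and invariant under all other factors. For a generic $A$ the element $p_i(A)$ is regular semisimple and $q_i(A)$ is not proportional to it --- a single open condition, verified on one example --- and in $\PGL_2$ the common centralizer of such a pair is trivial. Hence $G$ acts generically freely on $\scr L_n$, and by Rosenlicht's theorem
\[
\operatorname{trdeg}_{\Bbb C}\Bbb C(\scr L_n)^G=\dim\scr L_n-\dim G=(4^n-1)-3n=4^n-3n-1.
\]

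Let $B\subseteq\scr L_n$ be the $G$-subrepresentation that is the sum of the summands of $\self(V_1\otimes\cdots\otimes V_n)=\bigotimes_i(\Bbb C\oplus\sLie(V_i))$ on which $p_i,q_i$ depend, namely $\bigoplus_i\sLie(V_i)$ together with $\bigoplus_i\sLie(V_i)\otimes\sLie(V_{j(i)})$, and let $B'$ be the complementary subrepresentation, so the linear projection $\scr L_n\to B$ is a $G$-equivariant vector bundle. A short computation with partial traces shows $p_i,q_i$ are already functions on $B$, so $G$ acts generically freely on $B$ too, and the no-name lemma gives a $G$-birational isomorphism $\scr L_n\simeq B\times B'$ with trivial $G$-action on the second factor; hence $\Bbb C(\scr L_n)^G\cong\Bbb C(B)^G(t_1,\dots,t_r)$ with $r=\dim B'$, and it is enough to handle $B$. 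There I would construct a rational section of the quotient map $B\to B/G$ as follows. For generic $X\in\sLie(V_i)$ the $\PGL(V_i)$-orbit is the affine quadric $\{Q=Q(X)\}$, which carries a point rational over $\Bbb C(Q(X))$; by explicit Householder-type reflection formulas one writes an element $g_i(b)\in\PGL(V_i)$, rational in $b\in B$, taking $p_i(b)$ to a fixed representative depending rationally on $Q(p_i(b))$. The residual stabilizer is the torus $\SO(p_i^{\perp})\cong\Bbb G_m$ of rotations of the $Q$-nondegenerate plane orthogonal to $p_i$, and one further rotation, read off from the projection of $q_i(b)$ to that plane, kills it. The composite $\gamma=(g_i^{-1})_i\colon B\to G$ (a rational map) satisfies $\gamma(g\cdot b)=g\,\gamma(b)$, so $b\mapsto(\gamma(b),\gamma(b)^{-1}\cdot b)$ is a $G$-birational isomorphism of $B$ with $G\times T$ ($G$ acting by left translation on $G$), where $T$ is the slice of points with $p_i,q_i$ in standard form; thus $\Bbb C(B)^G\cong\Bbb C(T)$.

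It remains to see that the slice $T$ is rational. Standardizing $p_i$ fixes the $\sLie(V_i)$-component of a point of $T$ to a rational function of the parameter $Q(p_i)$, and ``$q_i$ in standard form'' is an affine-linear condition (with coefficients rational in the parameters) on the $\sLie(V_i)\otimes\sLie(V_{j(i)})$-component, which enters $q_i$ linearly once $p_{j(i)}$ is standardized; the remaining coordinates are unconstrained. A direct count exhibits a coordinate system on $T$ --- the parameters $Q(p_i)$, the invariants $\tr(p_iq_i)$ and $Q(q_i)$, and the free parts of the bilinear components --- of cardinality exactly $\dim B-\dim G$, so $T$ is birational to an affine space. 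Pulling this coordinate system back to $\scr L_n$ and adjoining the $t_j$ above produces an explicit transcendence basis, and combining the steps proves that $\Bbb C(\scr L_n)^G$ is purely rational of transcendence degree $4^n-3n-1$.

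The main obstacle is twofold. First, \emph{generic freeness}: a single $\sLie_2$-valued covariant only cuts $\PGL_2$ down to a maximal torus, so each factor needs a second, genuinely independent covariant --- which is exactly why the partial-trace construction of $q_i$ and the auxiliary index $j(i)$ are needed (and why one wants $n\ge 2$). Second, \emph{rationality of the slice}: bringing an element of $\sLie_2$ to diagonal form requires a square root, so the standardization must target a representative varying rationally with $Q(p_i)$, and the residual torus must be tracked throughout; the bookkeeping is routine for $n\ge 3$ but needs care when $n=2$, where $\{i,j(i)\}$ cannot be chosen disjoint over $i$, forcing a coincidence among the covariant data and one relation among the slice parameters. (For $n=1$ there is no second covariant and the picture is genuinely different: $\Bbb C(\scr L_1)^{\PGL_2}=\Bbb C(\det)$ has transcendence degree $1$.)
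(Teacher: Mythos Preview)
Your approach is correct and essentially coincides with the paper's first proof: your choice of auxiliary index $j(i)$ is precisely the paper's ``nowhere zero vector field'' $\Gamma$, your pair of covariants $(p_i,q_i)$ plays the same role as the paper's pair $(a_j,b_j)=(\rho_j,[C_{ij}\lrcorner a_i,a_j])$ in rigidifying each $\SO_3$-factor, and your No-Name reduction to the subrepresentation $B$ spanned by the one-point and $\Gamma$-edge two-point components mirrors the paper's reduction to its bundle $\scr B$. The only variation is in the endgame---you build an explicit rational slice in $B$ with trivial isotropy, whereas the paper's first proof instead identifies (an open set of) $\scr B$ with $\prod_j\mathfrak F_j$ and exhibits this as a $G$-torsor over $\Bbb G_m^{2n}$; the paper's second proof does use a slice, but one carrying a residual Weyl group $(\Bbb Z/2)^n$, dispatched by Fischer's theorem rather than by your rational normalization of $p_i$ to a point (not merely a line) on its quadric.
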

  
  \begin{rem} To the best of our knowledge, Theorem \ref{bigthm:main}  does not directly follow from known results on the rationality problem \cite{böhning2009rationality}, \cite{CT-S}.  
  
  If $n=2$, then Theorem \ref{bigthm:main} appears in \cite[thm.~VIII.1]{CCKN-effective}. 
The proof given there relies on two ingredients: (i) the existence of a rational section (or slice) for the action on which a discrete 
semi-simple finite abelian Weyl group acts, and (ii) the ``No-Name Lemma'' \cite{dolgachev}, the latter 
  which is a tool used to identify the field of rational invariants of the total space of a $G$-linearized vector bundle whose base space is a $G$-torsor.
The proof proceeds by means of determining the algebraic quotient of the  action of the Weyl group on the rational section.

In this paper, we provide two proofs of Theorem \ref{bigthm:main}. Our first
proof dispenses with a choice of rational section altogether and is more direct. 
An added feature of the first proof is that it enables one to construct an explicit transcendence basis for  $\Bbb C(\scr L_n)^G$. This is accomplished in  \S\ref{sec:basis}.  Our second proof
is similar in spirit to the proof of \cite[thm.~VIII.1]{CCKN-effective} in that it uses both the No-Name Lemma and a judicious
choice of rational section.
  \end{rem}

   \begin{out} In \S\ref{sec:prelim} we provide an alternative description of the space $\scr L_n$, called the Bloch model, 
   which expresses a mixed state in terms of its set of correlation functions. In \S\ref{sec:quotients} we  briefly discuss
   geometric invariant theory in the affine case \cite[ch.~\!\!1\S2]{GIT}. Then \S\ref{sec:rationality} contains a proof of Theorem \ref{bigthm:main}.
In \S\ref{sec:basis}, we provide a transcendence basis for $\Bbb C(\scr L_n)^G$. Lastly, \S\ref{sec:second} sketches
our second proof of Theorem \ref{bigthm:main}.
   \end{out}

\begin{acks}
  The authors are supported by the U.S. Department of Energy, Office of Science, Basic Energy Sciences, under Award Number DE-SC-SC0022134.
\end{acks}
  
\section{States and observables}  \label{sec:prelim}
The purpose of this section is to describe the state space $\scr L_n$, the associated
observables, as well as the kinds of correlation functions
that arise in considering quantum systems of $n$-qubits. We also
explain how the  complexified local unitary group acts on $\scr L_n$.

 A {\it qubit} is a complex vector space of dimension two. If $V$ is a qubit, then
we write 
\[
\scr V := \mathfrak{sl}(V)
\] for the Lie algebra of {\it  traceless endomorphisms} of $V$.
The {\it analytical scalar product} $\scr V\otimes \scr V \to \Bbb C$ is given by the Killing form
\[
\langle A,B\rangle := \tfrac{1}{2}\tr(AB)\, ,
\]
in which $\tr$ denotes the trace of an operator.
When $A = B$, we will resort to the notation $\lVert A\rVert^2 := \langle A, A\rangle$.

The {\it vector product} 
\[
[{-},{-}]\: \scr V\otimes \scr V \to \scr V
\]
is  the Lie bracket $[A,B] = AB-BA$.  Then the triple scalar product
\begin{align*}
\Lambda^3 \scr V &@> \cong >> \Bbb C\\
A\wedge B \wedge C &\mapsto \langle [A,B],C\rangle
\end{align*}
equips $\scr V$ with a canonical orientation.

\begin{rem} A choice of ordered basis for $V$ induces an isomorphism of Lie algebras
 $\mathfrak{sl}_2(\Bbb C) \cong \scr V$. Then  $\scr V$ is 
 equipped with a positively oriented ordered basis of  {\it Pauli matrices}
\[
\sigma_x = \begin{pmatrix}
0& 1\\
1 & 0
\end{pmatrix},
\sigma_y = \begin{pmatrix}
0 & -i\\
i & 0
\end{pmatrix} \! ,
\sigma_z = \begin{pmatrix}
1 & 0\\
0 & -1
\end{pmatrix}\! .
\]
\end{rem}

Let $\SO(\scr V)$ be  the special orthogonal group of $\scr V$, i.e., the 
group of orientation preserving isometries $\scr V \to \scr V$. 

\begin{rem} 
Let $\PGL(V) = \GL(V)/\Bbb C^\times$ denote the projective general linear group of 
the qubit $V$.
Then the action $\PGL(V)  \times \scr V \to \scr V$ defined by
conjugating traceless endomorphisms is orientation preserving and preserves the
analytical scalar product. Hence, the adjoint of this action 
defines a canonical isomorphism of Lie groups
\[
\PGL(V) \cong \SO(\scr V)\, .
\]
\end{rem}

\subsection{L-states}   For a vector space
$V$, we let $\self(V)$ denote the vector space of endomorphisms $V\to V$.
Let $\overline{n} = \{1,\dots n\}$. Fix  $n$ qubits
$V_1,\dots V_n$. Then associated with each
$V_j$ we have $\scr V_j$ as above.

For $I \subset \overline{n}$ we set
\[
V_I = \bigotimes_{i\in I} V_i \, ,\qquad \scr V_I = \bigotimes_{i\in I} \scr V_i\, .
\]
We define the space of {\it L-states on $n$-qubits}
\[
\scr L_n := \scr L(V_1,\dots,V_n) \subset \self(V_{\overline{n}})
  \]
as the codimension one affine subspace consisting of the trace one endomorphisms
of $V_{\overline{n}} = V_1 \otimes \cdots \otimes V_n$. We refer to vectors
$\rho \in \scr L_n$ as {\it L-states}; these are also known as the {\it mixed states} of the quantum system. 
Note that as an algebraic variety, $\scr L_n$  
is isomorphic to the affine space $\Bbb A^{4^n-1}$.

We set $G_j =  \PGL(V_j) \cong \SO(\scr V_j)$ and define
\[
G := \prod_{j=1}^n G_j \cong  \prod_{j=1}^n  \SO(\scr V_j) .
\]
Then $G$ acts on $\scr L_n$ by conjugating endomorphisms. Moreover, $G$ is
the complexification of the group of local unitary transformations.

\subsection{Observables}
The space $\Omega$ of {\it (global) observables} 
 is the vector space   $\self(V_{\overline{n}})$.
 Given an observable $A$, the expected value of measuring $A$ in an L-state $\rho$ is
given by 
\[
\tr(A\rho) \in \Bbb C\, .
\]
A {\it local observable} at $j\in \overline{n}$ is an element of $\scr V_j$.   A local observable $A\in \scr V_j$
determines a global observable by forming the tensor product $A\otimes \text{id}$,
where  $\text{id}$  is the identity operator of $V_{\overline{n} \setminus \{i\}}$.

More generally, 
let
\[
\binom{\overline{n}}{k}
\]
denote the space of subsets of of $\overline{n}$ of cardinality $k$.  Then one has a  vector bundle
\[
E_k \to \tbinom{\overline{n}}{k}
\]
whose fiber at $I\in  \binom{\overline{n}}{k}$ is
$
\scr V_I
$.
A {\it $k$-point observable} at the configuration $I$ is an element of $\scr V_I$. A $k$-point observable at $I$
determines a global observable by tensoring it with the identity operator of $\scr V_{\overline{n} \setminus I}$. 

\begin{rem} 
It is customary to identify a $k$-point observable with its associated global observable.
\end{rem}

\subsection{Correlation functions} 
For a given $\rho \in \scr L_n$ and a choice of section $s\: \binom{\overline{n}}{k} \to E_k$,
we obtain a function
$
 \hat s\: \binom{\overline{n}}{k} \to \Omega
$
which assigns to $I$ the global observable associated with $s(I)$.
A section $s$ is said to be {\it supported} at $I$ if it vanishes
whenever $J \ne I$. In this case,  specifying $s$ is equivalent to specifying 
a $k$-point observable at $I$. 
Moreover, to any section $s$ we may associate a section supported at $I$, namely,
the one associated with the observable $s(I)$.
 
The  {\it $k$-point correlation function} of $\rho$ associated with a section $s$  is the map
\begin{align*}
\tbinom{\overline{n}}{k}  & @>a(\rho,s)>>\Bbb C\, ,\\
                     I                          &\mapsto \tr(\hat s(I)\rho) 
\end{align*}
which describes the expected values of  a family of $k$-point observables.
We note that  $a(\rho,s)(I)$ only depends on $\rho$ and the $k$-point observable $s(I)$.

If we fix $\rho$ and $I$, then
the assignment $s\mapsto a(\rho,s)(I)$ defines a linear functional $\scr V_I \to \Bbb C$.
Consequently, we have defined a function
\begin{align} \label{eqn:correlate}
\begin{split}
\scr L_n &\to \scr V_I^* \\
\rho & \mapsto (s\mapsto a(\rho,s))\, ,
\end{split}
\end{align}
which assigns to an $L$-state the totality of its $k$-point correlation functions evaluated at $I$.

The vector space of sections of $E_k \to \binom{\overline{n}}{k}$
supported at $I$, i.e., the space $\scr V_I$,
admits a basis consisting of $3^k$  vectors:
Choose an ordered basis for each $V_i$. Then each $\scr V_i$ inherits a basis of Pauli matrices 
$\{\sigma^i_x,\sigma^i_y,\sigma^i_z\}$ and $\scr V_I$ has an associated basis given by 
$3^k$ basic tensors of the form
\[
\sigma_{I,\phi} :=  \underset{i\in I}\otimes \sigma^i_{\phi(i)}\, ,
\]
where $\phi \: I \to \{x,y,z\}$ ranges over all functions. 

If we let $I$ and $k$ vary, then it is straightforward to check that  $\rho$ is completely determined by
the collection of  $4^n-1$ correlations $a(\rho,s)$, where $s$ varies 
through the sections supported on any non-empty finite subset of ${\overline{n}}$.

\begin{ex}[$n=2$] In this instance, for a given $\rho$ it is convenient to display the correlation functions
as follows: Set $\sigma_0 = I, \sigma_1 = \sigma_x, \sigma_2 = \sigma_y, \sigma_3 = z$, and
set $\rho_{ij} = \tr((\sigma_i \otimes \sigma_j) \rho)$. Then we obtain a $4\times 4$ complex matrix
of correlations
\[
\begin{pNiceArray}{c|ccc}
1 & \rho_{01} & \rho_{02} & \rho_{03} \\
\hline
\rho_{10} & \rho_{11} & \rho_{12} & \rho_{13} \\
\rho_{20} & \rho_{21} & \rho_{22} & \rho_{23}\\
\rho_{30} & \rho_{31} & \rho_{32} & \rho_{33}  
\end{pNiceArray}
\]
in which $\rho_{00} = \tr(\rho) = 1$. In the above, the displayed $1\times 3$  and $3\times 1$ 
blocks list the one-point correlations, and the $3\times 3$ block 
lists the two-point correlations.
\end{ex}

\subsection{The Bloch model} There is an alternative way of describing L-states
that is equivalent to the correlation function description.
The idea is to implement the identifications $\self(V_i) \cong \Bbb C \oplus \scr V_i$. These
 induce a canonical decomposition 
\[
\Omega \,\,  \cong\,\,   \bigoplus_{I\subset {\mathbf n}} \scr V_I \, .
\]
Moreover, the constraint that an endomorphism be of trace one on the left
corresponds to excising the summand $\scr V_\emptyset$ on the right. 

\begin{defn} The {\it Bloch model} for L-states on $n$-qubits is
\[
\scr L_n^b := \bigoplus_{\emptyset \ne I\subset {\mathbf n}} \scr V_I \, .
\]
\end{defn}

\begin{ex}[$n = 2$] The space $\scr L^b_2$ is given by triples
\[
(a_1,a_2,C)
\]
in which $a_i \in \scr V_i$ and $C\in \scr V_1 \otimes \scr V_2$.
\end{ex}

In summary, there is a canonical isomorphism
\begin{align*}
\scr L_n &@> \cong >> \scr L_n^b\, ,\\
\rho       &\mapsto \rho^b\, .
\end{align*}

\begin{rem} The action of $G$ on $\scr L_n^b$ preserves the summands $\scr V_I$.
To determine how $G$ acts on $\scr V_I$, 
recall the canonical isomorphism
\[
 G \cong \prod_{j=1}^n \SO(\scr V_j)\, .
\]
For $I\subset \overline{n}$, let $G_I = \prod_{j\in I} \SO(\scr V_j)$. Then the action of $G$ on 
$\scr V_I$ is by $G_I$ using the projection $G\to G_I$.
\end{rem}

\begin{rem}
Henceforth, we only consider the model for L-states given by $\scr L_n^b$ and we will drop the superscript from the notation.  
\end{rem}

\begin{defn} For $I \subset \overline{n}$, the canonical projection
\begin{align*}
\scr L_n &\to \scr V_I\\
\rho &\mapsto \rho_I
\end{align*}
defines the {\it $I$-component} of an L-state.
\end{defn}

\begin{rem} With respect to the canonical isomorphism $\scr V_I^\ast \cong \scr V_I$,
the map $\scr L_n \to \scr V_I^\ast$  of \eqref{eqn:correlate} coincides with the projection onto the  $I$-component. 
\end{rem}

\section{Quotients and rationality} \label{sec:quotients}
Here we assemble some necessary results from geometric invariant theory 
\cite[ch.~\!\!1\S2]{GIT}, \cite{PV} in the affine case. 
In what follows, $k$ will be
an algebraically closed field of characteristic zero. A $k$-{\it variety} $X$ is a scheme over $k$
 which is integral (i.e., irreducible and reduced) and of finite type.

\subsection{Invariant functions and quotients}
\label{sec:sections}
Let $X$ be a $k$-variety and let $G$ an affine algebraic group over $k$ acting (algebraically) on $X$.
Recall that $G$ is said to be {\it reductive} if it possesses a faithful semi-simple representation over $k$.
In what follows, we will always assume that $G$ is reductive.

  We write the coordinate ring (i.e., the global sections of the structure sheaf) of $X$ by $k[X]$  
  and the associated function field by $k(X)$.
  We denote the ring of polynomial $G$-invariant functions on $X$ by $k[X]^G$, so $k[X]^{G}$ is a $k$-subalgebra of $k[X]$.
  Similarly, we write the field of rational $G$-invariant functions by $k(X)^{G}$.
  
 \begin{defn} The {\it algebraic quotient} of $X$ by $G$ is the affine scheme
 \[
X/\!\!/G :=  \spec(k(X)^G)\, .
  \]
  We define a {\it rational quotient} for the action of $G$ on $X$ to be
  a $k$-variety$Y$ which is birationally equivalent to the
  affine quotient. Equivalently, $k(Y) \cong k(X)^G$.
  \end{defn}  
  A fundamental question regarding the field of rational invariants $k(X)^G$ is whether it is \emph{purely rational}, i.e.,
   whether $k(X)^{G}$ is generated by a finite number of algebraically independent rational functions. This is equivalent
   to the question of whether or not there is a rational quotient for the action of $G$ on $X$
   that is isomorphic to an affine space $\Bbb A^m$.
If this is the case, then the transcendence degree
  of the field $k(X)^G$ over $k$ is $m$.

\subsection{Rationality }
 A useful tool for proving rationality is the following version of descent for vector bundles along a torsor.

\begin{prop}[Descent {\cite{dolgachev}, \cite[2.13]{PV},  \cite{CT-S}}]\label{prop:no-name}
  Let $G$ be a reductive group that acts almost freely on a $k$-variety $X$.
    Let $\pi \: E\to X$ be a $G$-linearized vector bundle. Then the algebraic quotient
    $E/\!\!/G$ is birationally equivalent
    to  a vector bundle over the algebraic quotient $X/\!\!/G$.   
    \end{prop}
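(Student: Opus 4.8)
The plan is to deduce the statement from faithfully flat descent for vector bundles, applied to a torsor obtained by shrinking $X$. First I would reduce to the case of a free action admitting a geometric quotient. Since $G$ is reductive and $\mathrm{char}\, k = 0$, the hypothesis that $G$ acts almost freely provides a $G$-invariant dense open $U \subseteq X$ on which the stabilizers are trivial; by Rosenlicht's theorem we may shrink $U$ further so that a geometric quotient $q\: U \to B$ exists, with $B$ a $k$-variety satisfying $k(B) \cong k(U)^G = k(X)^G$, i.e., $B$ is birationally $X/\!\!/G$. Freeness of the action then makes $q\: U \to B$ a $G$-torsor, trivial in the fppf (indeed étale) topology. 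Note that $G$ need not be special — for the groups arising in our application $G$ is a product of copies of $\PGL_2$ — so $q$ need not admit a Zariski-local section, which is precisely why descent, rather than an elementary slice argument, is needed.

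Next I would restrict the bundle. Set $E_U := \pi^{-1}(U)$, a $G$-linearized vector bundle over $U$ on whose total space $G$ acts freely (the action on $E_U$ is free because $G$ acts freely on $U$ and $\pi$ is $G$-equivariant). The key input is the standard equivalence, valid for any $G$-torsor $q\: U \to B$, between $G$-equivariant quasi-coherent sheaves on $U$ and quasi-coherent sheaves on $B$, implemented by $q^\ast$ in one direction and by the $G$-invariant pushforward $({-})/G$ in the other; this equivalence carries finite locally free sheaves to finite locally free sheaves and preserves the rank. Applied to the $G$-linearization of $E_U$, it produces a vector bundle $\bar E \to B$ with $\rk \bar E = \rk E$ and a $G$-equivariant isomorphism $E_U \cong q^\ast \bar E$; concretely $\bar E = E_U/G$, with the projection to $B = U/G$ induced by $\pi$. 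Since $E_U = q^\ast\bar E$, the map $E_U \to \bar E$ is the pullback of the torsor $q$ along $\bar E \to B$, hence itself a $G$-torsor, so $\bar E$ is a geometric quotient of $E_U$.

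Finally I would identify function fields. The set $E_U$ is $G$-invariant and dense open in $E$ (being the preimage of the $G$-invariant dense open $U$), so every $G$-invariant rational function on $E$ is determined by, and conversely extends from, its restriction to $E_U$; thus
\[
k(E)^G = k(E_U)^G = k(E_U/G) = k(\bar E),
\]
the middle equality because $E_U \to \bar E$ is a geometric quotient. Therefore $E/\!\!/G = \spec(k(E)^G)$ is birationally equivalent to the vector bundle $\bar E$ over $B \cong X/\!\!/G$, which is the assertion. (One may add that any vector bundle over an irreducible variety is Zariski-locally trivial, so in fact $E/\!\!/G$ is birational to $(X/\!\!/G) \times \Bbb A^{\rk E}$.)

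I expect the only genuine obstacle to be the passage to an honest $G$-torsor and the invocation of fppf descent: one must verify that after shrinking, $q\: U \to B$ is a torsor locally trivial in a topology for which vector bundles satisfy descent, and that the $G$-linearization of $E$ supplies exactly the descent datum needed to produce $\bar E$. For the non-special groups in our setting this cannot be arranged Zariski-locally, so the argument genuinely rests on faithfully flat (or étale) descent of vector bundles together with its compatibility with $G$-linearizations. Everything else — the reduction to a free action via Rosenlicht, and the manipulation of function fields along the dense open $E_U \subseteq E$ — is formal.
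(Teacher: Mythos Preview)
The paper does not give a proof of this proposition: it is stated as a cited result (Dolgachev; Popov--Vinberg; Colliot-Th\'el\`ene--Sansuc), and the paper only adds a remark reformulating it as the existence, over a dense open $U\subset X$ on which $G$ acts freely, of a $G$-equivariant trivialization $U\times\Bbb A^d\cong \pi^{-1}U$, equivalently a basis of $G$-equivariant sections of $\pi$ over $U$.

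Your argument via Rosenlicht and faithfully flat descent for $G$-linearized vector bundles along the torsor $q\:U\to B$ is correct and is essentially the standard proof contained in the cited references. The paper's reformulation is what your construction yields after one further step: once you have the descended bundle $\bar E\to B$, trivialize it Zariski-locally on $B$ (shrinking $B$, hence $U$) and pull the resulting frame back along $q$ to obtain the equivariant trivialization $U\times\Bbb A^d\cong q^\ast\bar E\cong E_U$. So there is no discrepancy in approach to report; you have supplied the argument the paper delegates to the literature.
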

    
    \begin{rem} Let $d$ be the rank of  $\pi$.
    Then with respect to the assumptions,
    another formulation of Proposition \ref{prop:no-name} is that there is a commutative diagram of $k$-varieties with $G$-action
    \[
    \xymatrix{
    U\times \Bbb A^d \ar[dr]_{p_1} \ar[rr]^\cong &&  \pi^{\ast}U\ar[dl]^\pi \\
   &  U
   }
   \]
   in which $U \subset X$ is an open subset on which $G$ acts freely,  $p_1$ is first factor projection,
   $G$ acts trivially on $\Bbb A^d$, and the horizontal map is an isomorphism of vector bundles.
   
Note that the construction of such a diagram is equivalent to finding a basis of equivariant sections of 
$\pi$ along the open set $U$.
\end{rem}

    \begin{cor}[``No-Name Lemma'' {cf.~\cite[cor.~3.8]{CT-S}}] \label{cor:no-name} 
There is an isomorphism of function fields
    \[
 k(E)^G \cong k(\Bbb A^d) \otimes_k k(X)^G\, .
    \]
    \end{cor}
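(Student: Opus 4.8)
The plan is to deduce the corollary from Proposition \ref{prop:no-name} by passing from the birational equivalence of varieties to the corresponding isomorphism of function fields. First I would note that, by Proposition \ref{prop:no-name}, the algebraic quotient $E/\!\!/G$ is birationally equivalent to a vector bundle over $X/\!\!/G$; call this vector bundle $p\colon \mathcal{E}\to X/\!\!/G$. Birationally equivalent varieties have isomorphic function fields, so $k(E)^G = k(E/\!\!/G) \cong k(\mathcal{E})$, where the first equality is the definition of the algebraic quotient (via $E/\!\!/G = \spec(k(E)^G)$ together with $k(\spec R) \cong \Frac(R)$ for an integral affine scheme).

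Next I would use the fact that a vector bundle of rank $d$ is Zariski-locally trivial, hence birationally equivalent to a product: restricting $\mathcal{E}$ to an affine open $U'\subset X/\!\!/G$ over which it trivializes gives $p^{-1}(U') \cong U' \times \Bbb A^d$, and since $U'$ is dense in $X/\!\!/G$ and $p^{-1}(U')$ is dense in $\mathcal{E}$, we get $k(\mathcal{E}) \cong k(U'\times \Bbb A^d) = k\bigl((X/\!\!/G)\times \Bbb A^d\bigr)$. Then I would invoke the standard identification of the function field of a product of varieties over an algebraically closed field: for $k$-varieties $Y$ and $Z$ one has $k(Y\times Z) \cong k(Y)\otimes_k k(Z)$ localized appropriately, and in the special case $Z = \Bbb A^d$ this reads $k(Y\times \Bbb A^d) \cong k(Y)(t_1,\dots,t_d)$, which is exactly $k(\Bbb A^d)\otimes_k k(Y)$ in the notation of the statement (the tensor product here being shorthand for the compositum purely transcendental extension). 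Applying this with $Y = X/\!\!/G$ and recalling $k(X/\!\!/G) = k(X)^G$ yields $k(E)^G \cong k(\Bbb A^d)\otimes_k k(X)^G$, as desired.

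The only genuinely delicate point is making sure the symbol $k(\Bbb A^d)\otimes_k k(X)^G$ is interpreted correctly: the literal tensor product of two fields over $k$ is generally not a field, so what is meant is the fraction field of that tensor product, equivalently the rational function field $k(X)^G(t_1,\dots,t_d)$ in $d$ independent variables. Once that is understood, the argument is purely formal — the content has all been packed into Proposition \ref{prop:no-name}. I would therefore keep the proof to a couple of sentences, emphasizing the chain $k(E)^G = k(E/\!\!/G) \cong k(\mathcal{E}) \cong k\bigl((X/\!\!/G)\times\Bbb A^d\bigr) \cong k(X)^G(t_1,\dots,t_d)$, and remark that the rank $d$ of the birational vector bundle agrees with the rank of $\pi$ since birational equivalence of vector bundles preserves fiber dimension.
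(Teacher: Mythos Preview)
Your deduction is correct and is exactly the natural way to extract the corollary from Proposition~\ref{prop:no-name}; the paper itself does not spell out a proof, merely citing \cite[cor.~3.8]{CT-S}. If anything, the paper's Remark following Proposition~\ref{prop:no-name} suggests a marginally more direct route: it gives a $G$-equivariant isomorphism $U\times\Bbb A^d\cong \pi^{-1}(U)$ with $G$ acting trivially on $\Bbb A^d$, so taking $G$-invariant rational functions on both sides immediately yields $k(E)^G\cong k(U)^G(t_1,\dots,t_d)=k(X)^G(t_1,\dots,t_d)$ without first descending to a vector bundle on $X/\!\!/G$ and then re-trivializing it---but this is the same argument with the order of ``trivialize'' and ``take quotient'' swapped, and your version is equally valid.
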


\section{First proof of Theorem \ref{bigthm:main}}  \label{sec:rationality}

A {\it directed graph} $\Gamma $ consists of a pair of sets $(\Gamma_1,\Gamma_0)$ together
with a function 
\[
(d_0,d_1)\:  \Gamma_1\to \Gamma_0 \times \Gamma_0
\]
which is  one-to-one. 
Here $\Gamma_1$ is the set of (directed) edges and $\Gamma_0$ is the set of vertices.
The function $d_0$ assigns to an edge its incoming vertex, and $d_1$ assigns to an edge
its outgoing vertex.

Let $K_n$ denote the complete directed graph whose set of vertices is $\overline{n}$.
Then $K_n$ has  $n^2-n$ edges labeled by ordered pairs $(i,j)$, with $i\ne j$.
For such an edge, $i$ is the incoming vertex and $j$ is the outgoing vertex. If there is no confusion, we frequently
 abbreviate notation and denote the edge $(i,j)$  by $ij$. Then $d_0(ij) = i, d_1(ij) =j$.
 
\begin{defn} A {\it nowhere zero vector field} on $n$ vertices is
 a directed subgraph  $\Gamma \subset K_n$ such that
\begin{itemize}
\item the set of vertices of $\Gamma$ is ${\overline{n}}$, and
\item to each vertex $i$ of $\Gamma$, there is 
precisely one edge of the form $ij$, i.e., each vertex has precisely one outgoing edge.
\end{itemize}
In particular, there is a bijection between the set of vertices and the set of edges of $\Gamma$.
\end{defn} 
The second condition in the definition is equivalent to requiring
 $d_0 \: \Gamma_1 \to \Gamma_0$ to be an isomorphism or equivalently,
requiring the Euler characteristic of $\Gamma$ to be trivial.
 
 \begin{rem}
A nowhere zero vector field is given by
the cyclic graph indicated by sequence of arrows
\[
1 \to 2 \to \cdots \to n \to 1\, .
\]
In fact, nowhere zero vector fields are abundant:
There is a
a one-to-one correspondence between nowhere zero vector fields and
the set of right inverses of function $d_0\: (K_n)_1 \to (K_n)_0$. If $s$ is a right inverse, i.e. $d_0s = 1$,
then the corresponding nowhere zero vector field consists of all edges in the image of $s$.
In particular, there are precisely $(n-1)^n$ nowhere zero vector fields
on $n$ vertices.
\end{rem}

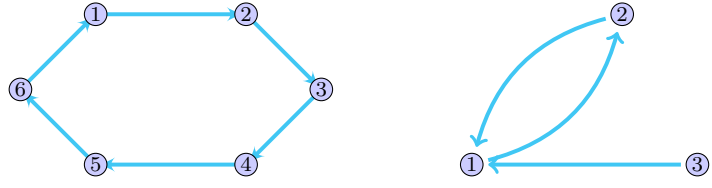
\begin{figure}
\begin{tikzpicture}
\tikzset{vertex/.style = {shape=circle,draw,minimum size=.5em}}
\tikzset{edge/.style = {->,> = latex'}}

\draw [edge, cyan!60, line width=1.5] (0,0) to (2,0);
\draw [edge, cyan!60, line width=1.5]  (2,0) to (3,-1) ;
\draw [edge, cyan!60, line width=1.5]  (3,-1) to (2,-2);
\draw [edge, cyan!60, line width=1.5]   (2,-2) to (0,-2);
\draw [edge, cyan!60, line width=1.5]  (0,-2) to (-1,-1); 
\draw [edge, cyan!60, line width=1.5]  (-1,-1) to (0,0); 

\draw[fill=blue!20!white] (0,0) circle [radius=.15];
  \draw[fill=blue!20!white] (2,0) circle [radius=.15]; 
  \draw[fill=blue!20!white] (3,-1) circle [radius=.15]; 
  \draw[fill=blue!20!white] (2,-2) circle [radius=.15]; 
  \draw[fill=blue!20!white] (0,-2) circle [radius=.15]; 
  \draw[fill=blue!20!white] (-1,-1) circle [radius=.15]; 

 \node at (0,0) {\tiny 1};
    \node at (2,0) {\tiny 2}; 
     \node at (3,-1) {\tiny 3};  
     \node at (2,-2) {\tiny 4}; 
      \node at (0,-2) {\tiny 5}; 
      \node at (-1,-1) {\tiny 6}; 

\draw[fill=blue!20!white] (5,-2) circle [radius=.15];
  \draw[fill=blue!20!white] (7,0) circle [radius=.15]; 
  \draw[fill=blue!20!white] (8,-2) circle [radius=.15]; 

 \node (a) at (5,-2) {\tiny 1};
    \node (b) at (7,0) {\tiny 2}; 
     \node (c)  at (8,-2) {\tiny 3};  

\draw[->, cyan!60, line width=1.5]
(a) edge[bend right] (b)
(c) edge (a) 
(b) edge[bend right]  (a);
\end{tikzpicture}

\caption{Nowhere zero vector fields on 6 vertices and on 3 vertices.}  \label{fig:one}
\end{figure}

In what follows we fix $\Gamma$, a nowhere zero vector field on $n$ vertices.

Let $\breve{\scr V}_i \subset \scr V_i$ denote the set of vectors having non-zero squared norm, and let
$\Lcirc \subset \scr L$ denote the Zariski open subset 
given by those L-states $\rho$
with the property that each one-point component $\rho_j \in \scr V_j$ has non-zero squared norm.

Label the vertices of $\Gamma$ by vectors $a_k \in \breve{\scr V}_k$, for $k = 1,\dots,n$. 
Then $a_k$ defines an orthogonal direct sum decomposition
\[
\scr V_k = \scr V_k^\ell \oplus \scr V_k^t\, ,
\]
in which $ \scr V_k^\ell =\Span(a_k)$.  With respect to the choice
of $a_k$, we call $\scr V_k^\ell$ the {\it longitudinal} of $\scr V_k$ and 
$\scr V_k^t$ the {\it transversal} of $\scr V_k$.

For each choice of such vertex labels,  we label an edge $ij$  by
the two dimensional vector space
\[
\scr V^{\ell t}_{ij} := \scr V_i^\ell \otimes \scr V_j^t   \, .                 
\]
In particular, $\scr V^{\ell t}_{ij}$   is canonically identified with $\hom(\scr V_i^\ell,\scr V_j^t)$ using the identification
$\scr V_i^\ell \cong (\scr V_i^\ell)^*$ arising from the analytical scalar product.

Let $\scr B$ be the space of such labelings. Then
\[
\scr B \subset \prod_{i\in \Gamma_0} \breve{\scr V}_k\,\,  \times \,\,   \prod_{ij\in \Gamma_1} \scr V_i \otimes \scr V_j\, 
\]
and projection onto the first factor defines a vector bundle  $\scr B \to \prod_i \breve{\scr V}_i$.
The fiber of this vector bundle at $a_\bullet := (a_1,\dots, a_n) \in \prod_i \breve{\scr V}_i$ is the direct sum 
\[
\bigoplus_{ij\in \Gamma_1} \scr V^{\ell t}_{ij}\, ,
\]
where $\scr V^{\ell t}_{ij}$ depends on $a_i$ and $a_j$.

The evident projection
\[
p\: \Lcirc_n \to \scr B
\]
 is a $G$-linearized vector bundle. The fiber of $p$ is identified with the vector space
\[
F:= \bigoplus_{ij\in \Gamma_1, i < j} ((\scr V_i^t \otimes \scr V_j^t) \oplus (\scr V_i^\ell \otimes \scr V_j^\ell)) \oplus \bigoplus_{|I|> 2} \scr V_I
\]
which has dimension $4^{n} - 5n-1$.

\begin{lem} \label{lem:no-name} There is an isomorphism 
\[
\Bbb C(\scr L_n)^G \cong \Bbb C(\scr F)  \otimes \Bbb C(\scr B)^G \, .
\]
\end{lem}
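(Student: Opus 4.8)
The statement is exactly the No-Name Lemma (Corollary~\ref{cor:no-name}) applied to the $G$-linearized vector bundle $p\colon \Lcirc_n \to \scr B$, so the plan is to verify that the hypotheses of Proposition~\ref{prop:no-name} are met and then to read off the conclusion. First I would confirm that $p$ really is a $G$-linearized vector bundle: the base $\scr B$ carries a $G$-action (through the factors $\SO(\scr V_i)$ acting on the $\breve{\scr V}_i$ and on the two-dimensional edge spaces $\scr V^{\ell t}_{ij}$), the action of $G$ on $\Lcirc_n$ preserves the fibers of $p$ because each $g_i \in \SO(\scr V_i)$ carries the decomposition $\scr V_i = \scr V_i^\ell \oplus \scr V_i^t$ determined by $a_i$ to the decomposition determined by $g_i a_i$, and $G$ acts linearly on the fibers. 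Identifying the fiber with the fixed vector space $F$ (independently of $a_\bullet$) is a matter of bookkeeping: one subtracts from $\scr L_n = \bigoplus_{\emptyset\ne I}\scr V_I$ the longitudinal lines $\scr V_i^\ell$ (absorbed into $\scr B$ as the base directions) and the ``mixed'' edge pieces $\scr V_i^\ell\otimes\scr V_j^t$ and $\scr V_i^t\otimes\scr V_j^\ell$ for the $n$ edges $ij\in\Gamma_1$ (also absorbed into $\scr B$), leaving the claimed summands; the dimension count $4^n - 1 - n - 2n - 2n = 4^n - 5n - 1$ then matches $\dim F$.

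The substantive hypothesis to check is that $G$ acts \emph{almost freely} on $\scr B$, i.e. that the generic stabilizer is trivial (or at least finite; one then passes to an open $U\subset\scr B$ on which $G$ acts freely). The key point is that labeling the vertices by vectors $a_k\in\breve{\scr V}_k$ already rigidifies each $\SO(\scr V_k)\cong\SO_3$ down to the stabilizer $\SO(2)$ of a line, and then the choice of an edge vector in $\scr V^{\ell t}_{ij} = \scr V_i^\ell\otimes\scr V_j^t$ breaks the residual rotational freedom in the transversal plane $\scr V_j^t$ of the outgoing vertex $j$. Because $\Gamma$ is a nowhere zero vector field, $d_0\colon\Gamma_1\to\Gamma_0$ is a bijection, so \emph{every} vertex $j$ occurs exactly once as the outgoing endpoint of some edge $ij$; hence for a generic labeling the transversal rotation at every vertex is pinned down, and the stabilizer in $G$ of a generic point of $\scr B$ is trivial. (Here one uses $a_k\in\breve{\scr V}_k$, so $\scr V_k^\ell$ is genuinely one-dimensional and the orthogonal decomposition is defined, and that a generic element of $\scr V_i^\ell\otimes\scr V_j^t$ is nonzero, hence has trivial $\SO(\scr V_j^t)=\SO(2)$-stabilizer.)

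With almost-freeness in hand, Proposition~\ref{prop:no-name} applies to $p\colon\Lcirc_n\to\scr B$ with fiber $F$ of rank $d = 4^n - 5n - 1$, and Corollary~\ref{cor:no-name} gives
\[
\Bbb C(\Lcirc_n)^G \;\cong\; \Bbb C(\Bbb A^d)\otimes_{\Bbb C}\Bbb C(\scr B)^G \;=\; \Bbb C(F)\otimes_{\Bbb C}\Bbb C(\scr B)^G .
\]
Finally, since $\Lcirc_n$ is a nonempty Zariski-open $G$-stable subset of the irreducible variety $\scr L_n$, restriction induces $\Bbb C(\scr L_n)^G \cong \Bbb C(\Lcirc_n)^G$, and (writing $\scr F$ for the affine space $F$) this yields the asserted isomorphism $\Bbb C(\scr L_n)^G\cong\Bbb C(\scr F)\otimes\Bbb C(\scr B)^G$.

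\textbf{Main obstacle.} The routine parts are the bundle structure and the dimension accounting; the one place requiring genuine care is the almost-freeness of the $G$-action on $\scr B$, and in particular the claim that labeling the vertices \emph{and} edges of a nowhere zero vector field kills the entire generic stabilizer. The nowhere-zero condition on $\Gamma$ (each vertex has exactly one outgoing edge, so $d_0$ is a bijection) is precisely what is needed here, and I would expect this to be the crux of the argument; everything downstream is a direct invocation of the No-Name Lemma.
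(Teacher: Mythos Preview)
Your approach is exactly the paper's: verify that $G$ acts (almost) freely on $\scr B$ and invoke the No-Name Lemma. The paper's own proof is a two-line assertion of freeness followed by the citation, so your expansion is in the same spirit.

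There is, however, a genuine slip in your freeness argument. You write that ``$d_0\colon\Gamma_1\to\Gamma_0$ is a bijection, so every vertex $j$ occurs exactly once as the outgoing endpoint of some edge $ij$,'' but $d_0(ij)=i$, not $j$: the nowhere-zero condition guarantees that every vertex is the \emph{source} of exactly one edge, not the target (see the $3$-vertex example in Fig.~\ref{fig:one}, where vertex $3$ is never a target). With the edge label $\scr V_i^\ell\otimes\scr V_j^t$ as written, a generic $C_{ij}$ kills the residual $\SO(\scr V_j^t)$ at the \emph{target} $j$, so your conclusion that every residual rotation is pinned down does not follow. The fix---consistent with the formulas in \S\ref{sec:basis}, where one sets $b_j=[\rho_{jk}\lrcorner\rho_k,\rho_j]$ using the unique edge $jk$ \emph{leaving} $j$---is to use the outgoing edge at each vertex to break its transversal symmetry; then $d_0$ being a bijection is exactly the hypothesis needed. (This directional inconsistency is present in the paper's \S\ref{sec:rationality} as well, so you have inherited it rather than introduced it.)

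A smaller bookkeeping point: in your dimension count you subtract $n$ for the ``longitudinal lines'' and $2n+2n$ for both mixed pieces, but in fact all of each $\scr V_i$ (dimension $3$) goes to the base, and only one $2$-dimensional mixed piece per edge is absorbed into $\scr B$; the correct subtraction is $3n+2n$, which of course still gives $4^n-5n-1$.
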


\begin{proof} 
The group $G$ acts freely on $\scr B$, so one obtains a $G$-torsor $\scr B \to \scr B/G$.
The result now follows by the ``No-Name Lemma'' \cite[cor.~3.8]{CT-S}
\end{proof}

Let
\[
\mathfrak F_j  = \{(v,w)|v,w \in \breve {\scr V}_j, \langle v,w\rangle = 0\}\, .
\]
Then $\mathfrak F_j$ is a variety.

Suppose $ij$ is an edge of $\Gamma$. 
Given a triple $(a_i,a_j,C_{ij}) \in \breve{\scr V}_i \times \breve{\scr V}_j \times {\scr V}^{\ell t}_{ij}$
we may associate the vector
\[
b_j := [C_{ij}\lrcorner a_i,a_j] \in {\scr V}^t_j\, .
\]
Here $C_{ij}\lrcorner a_i$ denotes contraction, and $[{-},{-}]$ denotes the Lie bracket of
${\scr V}_j$.  
We define
\[
\accentset{\circ}{\scr B} \subset \scr B
\]
the Zariski open subset given
by those $(a_\bullet,C_\bullet )$ such that $\lVert b_j\rVert^2 \ne 0$ for all $j$. 

Then we have  a $G$-equivariant map
\begin{align} \label{eqn:iso}
\begin{split}
\accentset{\circ}{\scr B}  &\to \prod_{j=1}^n \mathfrak F_j  \\
(a_\bullet,C_\bullet) &\mapsto ((a_1,b_1),\dots, (a_n,b_n))\, ,
\end{split}
\end{align}
where $G$ acts on $j$-th factor  $\mathfrak F_j$ via the projection
$G\to \SO(\scr V_j)$.

\begin{lem}  \label{lem:scheme-iso} The map $\accentset{\circ}{\scr B}  \to \prod_{j=1}^n \mathfrak F_j $ is an isomorphism of schemes.
\end{lem}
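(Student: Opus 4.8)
The plan is to construct an explicit two-sided inverse to the map $\accentset{\circ}{\scr B} \to \prod_{j=1}^n \mathfrak F_j$ and to verify that both the map and its inverse are morphisms of schemes, which will suffice since a bijective morphism of varieties over an algebraically closed field of characteristic zero with a morphic inverse is an isomorphism of schemes.

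First I would fix a nowhere zero vector field $\Gamma$ and recall that, by definition, each vertex $j$ has a unique incoming edge, say $ij$ (i.e., $d_1(ij)=j$). Since $d_0\colon \Gamma_1 \to \Gamma_0$ is a bijection, the assignment $j \mapsto (\text{unique } i \text{ with } ij \in \Gamma_1)$ is well-defined, and the correlation $C_{ij}$ lives in $\scr V^{\ell t}_{ij} = \scr V_i^\ell \otimes \scr V_j^t \cong \hom(\scr V_i^\ell, \scr V_j^t)$. The forward map sends $(a_\bullet, C_\bullet)$ to the tuple $((a_j, b_j))_j$ where $b_j = [C_{ij}\lrcorner a_i, a_j] \in \scr V_j^t$; one checks $\langle a_j, b_j \rangle = 0$ because $b_j$ is a Lie bracket with $a_j$ and the triple scalar product is alternating, so the target $\prod_j \mathfrak F_j$ is the right one, and the restriction to $\accentset{\circ}{\scr B}$ guarantees $\lVert b_j \rVert^2 \neq 0$, landing in $\mathfrak F_j$ as defined. $G$-equivariance is immediate since conjugation by $G_j = \SO(\scr V_j)$ commutes with contraction and bracket.

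The key step is inverting the passage $C_{ij} \mapsto b_j$ for a fixed pair $(a_i, a_j)$ with $a_i \in \breve{\scr V}_i$, $a_j \in \breve{\scr V}_j$. Write $u := C_{ij} \lrcorner a_i \in \scr V_j$; since $C_{ij} \in \scr V_i^\ell \otimes \scr V_j^t$ and $a_i$ spans $\scr V_i^\ell$ with $\lVert a_i\rVert^2 \neq 0$, contraction against $a_i$ is a linear isomorphism $\scr V^{\ell t}_{ij} \xrightarrow{\cong} \scr V_j^t$, with inverse $u \mapsto \lVert a_i\rVert^{-2}\, (a_i \otimes u)$. So it remains to recover $u \in \scr V_j^t$ from $b_j = [u, a_j]$. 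In the three-dimensional space $\scr V_j$ with its cross-product structure, the map $\mathrm{ad}_{a_j}\colon w \mapsto [w, a_j]$ has kernel $\Span(a_j) = \scr V_j^\ell$ and restricts to an isomorphism on $\scr V_j^t$ precisely when $\lVert a_j\rVert^2 \neq 0$ — indeed, on $\scr V_j^t$ one has $[[u, a_j], a_j] = -\lVert a_j\rVert^2\, u$ (the standard double-bracket identity for $\mathfrak{sl}_2$, equivalently the $\mathbf{a}\times(\mathbf{b}\times\mathbf{c})$ identity using $u \perp a_j$), so $u = -\lVert a_j\rVert^{-2}\,[b_j, a_j]$. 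Combining, the inverse map sends $((a_j, b_j))_j$ to $\big(a_\bullet,\, (C_{ij})_{ij\in\Gamma_1}\big)$ with $C_{ij} = -\lVert a_i\rVert^{-2}\lVert a_j\rVert^{-2}\, a_i \otimes [b_j, a_j]$, which is manifestly a morphism of schemes on the locus where all $\lVert a_i\rVert^2$ are nonzero — and that locus contains the image.

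Finally I would check that this candidate inverse actually lands in $\accentset{\circ}{\scr B}$: starting from a point of $\prod_j \mathfrak F_j$ and forming $C_{ij}$ as above, recomputing $b'_j = [C_{ij}\lrcorner a_i, a_j]$ returns $b_j$ by the two identities just used, and $\lVert b_j\rVert^2 \neq 0$ by hypothesis, so $(a_\bullet, C_\bullet) \in \accentset{\circ}{\scr B}$. The round trip the other way is the same computation read backwards. I expect the only real subtlety — and hence the main obstacle to write cleanly — is bookkeeping the canonical identifications $\scr V_i^\ell \cong (\scr V_i^\ell)^\ast$ and $\scr V^{\ell t}_{ij} \cong \hom(\scr V_i^\ell, \scr V_j^t)$ so that ``$C_{ij}\lrcorner a_i$'' and the reconstruction formula are genuinely inverse, together with getting the double-bracket identity and its scalar $-\lVert a_j\rVert^2$ exactly right in the conventions of the paper (where $\langle A, B\rangle = \tfrac12\tr(AB)$); once those normalizations are pinned down, everything is a polynomial identity on an explicit open subset and the isomorphism of schemes follows.
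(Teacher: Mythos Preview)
Your approach coincides with the paper's: both construct the inverse edge-by-edge by noting that $C_{ij}\in\hom(\scr V_i^\ell,\scr V_j^t)$ is determined by its value at $a_i$ and that $u\mapsto[u,a_j]$ is an automorphism of $\scr V_j^t$ when $\lVert a_j\rVert^2\neq 0$; you are simply more explicit, writing down a closed formula via the double-bracket identity where the paper merely asserts the relevant linear maps are isomorphisms and reads off the inverse. Two small corrections that do not affect the argument: the nowhere-zero-vector-field condition makes $d_0$ (not $d_1$) a bijection, so the $n$ edges are canonically indexed by their \emph{source} vertex rather than their target---the paper's own index conventions are loose here, and your inversion step is unchanged once this is straightened out; and in the paper's normalization ($\langle A,B\rangle=\tfrac12\tr(AB)$, bracket $=$ commutator) one has $[[u,a_j],a_j]=4\lVert a_j\rVert^{2}u$ for $u\perp a_j$ rather than $-\lVert a_j\rVert^{2}u$, so the scalar in your explicit inverse needs the adjustment you already anticipated.
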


\begin{proof} As above, we may regard a tensor $C_{ij} \in \scr V_{ij}^{\ell t}$
as a homomorphism $ C_{ij} \: \scr V_i^\ell \to \scr V_j^t$. With respect this indentification
the contraction with $a_i$ is given by the evaluation $C_{ij}(a_i)$.

Let $(a_j,b_j) \in \mathfrak F_j$ be given. Then a homomorphism
$\scr V_j^\ell \to \scr V_j^t$ is completely determined by its value at $a_j\in \scr V_j^\ell  $.
Moreover, the map
\begin{align*} 
\scr V_j^t & \to \hom(\scr V_j^\ell ,\scr V_j^t) \\
v & \mapsto (a_j \mapsto [v,a_j])
\end{align*}
is an isomorphism of vector spaces. In particular, the homomorphism determined by $a_j \mapsto b_j$ corresponds
under this isomorphism to an element $c\in \scr V_j^t$ which has the property that $[c,a_j] = b_j$. Setting $C_{ij}(a_j) := c$,
we obtain a unique homomorphism $C_{ij} \: \scr V_i^\ell \to \scr V_j^t$ that satisfies $[C_{ij}(a_i),a_j] = b_j$.

Next, given $((a_1,b_1),\dots, (a_n,b_n)) \in \prod_{j=1}^n \mathfrak F_j $, 
we obtain $C_{ij}$ for all edges of $\Gamma$ satisfying $[C_{ij}(a_i),a_j] = b_j$
 This recipe defines the inverse homomorphism
$\prod_{j=1}^n \mathfrak F_j  \to \accentset{\circ}{\scr B}$.
\end{proof}

In view of the lemma, it suffices to identify a rational quotient for the action of $G$ on $\prod_j \mathfrak F_j$.
As $G$ acts on the factor $\mathfrak F_j$ via the projection $G\to \SO(\scr V_j)$, it will suffice to 
identify a rational quotient for $G_j := \SO(\scr V_j)$ acting on $\scr F_j$. 

Let $\Bbb G_m = \Bbb C^\times$ denote the multiplicative
group. Consider the $G$-invariant map
\begin{align*}
\mathfrak F_j & \to \Bbb G_m \times \Bbb G_m\\
(a_j,b_j) &\mapsto (\lVert a_j \rVert^2,\lVert b_j\rVert^2)\, .
\end{align*}

\begin{prop} \label{prop:torsor} The map $\mathfrak F_j \to \Bbb G_m \times \Bbb G_m$ is $G_j$-torsor. Consequently,
a rational quotient for the action of $G$ on $\accentset{\circ}{\scr B} $ is given by
\[
\Bbb G_m^{\times 2n}\, .
\] 
\end{prop}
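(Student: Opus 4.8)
The plan is to show that $\mathfrak{F}_j \to \Bbb G_m \times \Bbb G_m$ is a $G_j$-torsor, i.e., that $G_j = \SO(\scr V_j)$ acts freely on $\mathfrak{F}_j$ and that the map exhibits $\Bbb G_m \times \Bbb G_m$ as the quotient, with local (even global, in the Zariski topology after base change, or at worst étale) triviality. Recall $\scr V_j$ is a three-dimensional complex vector space equipped with a nondegenerate symmetric form $\langle-,-\rangle$ and a chosen orientation, so $G_j \cong \SO_3(\Bbb C)$. A point of $\mathfrak{F}_j$ is an \emph{ordered orthogonal pair} $(v,w)$ with $\lVert v\rVert^2 \ne 0 \ne \lVert w\rVert^2$. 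The key observation is that such a pair, together with the third vector $u := [v,w]$ (the cross product, using the orientation), forms an \emph{orthogonal frame}: $u$ is orthogonal to both $v$ and $w$ because $\langle[v,w],v\rangle = \langle[v,w],w\rangle = 0$ (antisymmetry of the triple scalar product), and $\lVert u\rVert^2 = \lVert v\rVert^2\lVert w\rVert^2 - \langle v,w\rangle^2 = \lVert v\rVert^2\lVert w\rVert^2 \ne 0$, so the frame is nondegenerate. Hence $(v,w)$ determines, and is determined by, the triple $(v,w,u)$ which spans $\scr V_j$.

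First I would establish freeness. If $g \in \SO(\scr V_j)$ fixes $(v,w) \in \mathfrak{F}_j$, then $g$ fixes $v$, $w$, and (since $g$ is orientation-preserving and isometric, hence commutes with the cross product) also $u = [v,w]$; as $\{v,w,u\}$ is a basis of $\scr V_j$, we get $g = \id$. So the action is free (set-theoretically; scheme-theoretic freeness follows by the same argument applied to $R$-points for any $\Bbb C$-algebra $R$, since the frame remains a basis after any base change because its Gram determinant is a unit). Next I would identify the fibers and surjectivity: given $(\lambda,\mu) \in \Bbb G_m \times \Bbb G_m$, pick any orthogonal basis $\{e_1,e_2,e_3\}$ of $\scr V_j$ adapted to the orientation with prescribed squared norms — concretely, rescale a fixed positively-oriented orthonormal-type basis — to produce a point of $\mathfrak{F}_j$ with $\lVert v\rVert^2 = \lambda$, $\lVert w\rVert^2 = \mu$; this shows the map is surjective on points. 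For the fiber over $(\lambda,\mu)$: any two pairs with the same squared norms give two nondegenerate orthogonal frames $(v,w,u)$ and $(v',w',u')$ with matching Gram matrices $\mathrm{diag}(\lambda,\mu,\lambda\mu)$ and matching orientation, so the unique linear map sending one ordered frame to the other is an isometry in $\SO(\scr V_j)$ carrying the first pair to the second. Thus $G_j$ acts transitively on fibers, and combined with freeness, simply transitively.

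To upgrade ``simply transitive on fibers $+$ free'' to ``torsor,'' I would exhibit a section (or étale-local sections) of $\mathfrak{F}_j \to \Bbb G_m^{\times 2}$: over the base $\Bbb G_m\times\Bbb G_m$ with coordinates $(\lambda,\mu)$, send $(\lambda,\mu)$ to $(\sqrt{\lambda}\,\sigma_x^j,\ \sqrt{\mu}\,\sigma_y^j)$ after passing to the degree-$4$ cover $\Bbb G_m^{\times 2} \to \Bbb G_m^{\times 2}$, $(s,t)\mapsto(s^2,t^2)$; more invariantly, one has a Zariski-local trivialization since $\SO_3$ is special (every $\SO_3$-torsor over a scheme is Zariski-locally trivial — indeed the isometry frame bundle argument shows the total space $\mathfrak{F}_j$ is, as a scheme, the $\SO(\scr V_j)$-equivariant product $\SO(\scr V_j)\times(\Bbb G_m\times\Bbb G_m)$ after the trivializing choice of reference frame family above). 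Either way, the structure map becomes, after this base change, isomorphic to $G_j \times \Bbb G_m^{\times 2} \to \Bbb G_m^{\times 2}$ with $G_j$ acting by left translation, which is the defining property of a torsor. The main obstacle I anticipate is being careful about the difference between a torsor in the fppf/étale sense and a Zariski-locally-trivial one, and making the frame-bundle identification rigorous at the level of schemes rather than just $\Bbb C$-points — this amounts to checking that the assignment $(v,w)\mapsto$ (change-of-frame matrix relative to a reference frame) is a morphism of schemes, which it is because $u=[v,w]$ is polynomial in $(v,w)$ and the inverse of the frame matrix is polynomial divided by the Gram determinant, a unit on $\mathfrak{F}_j$.

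Finally, to conclude the ``Consequently'' clause: by Lemma \ref{lem:scheme-iso}, $\accentset{\circ}{\scr B} \cong \prod_{j=1}^n \mathfrak{F}_j$ $G$-equivariantly, where $G$ acts on the $j$-th factor through $G \to \SO(\scr V_j)$. Since each $\mathfrak{F}_j \to \Bbb G_m^{\times 2}$ is a $G_j$-torsor, the product $\prod_j \mathfrak{F}_j \to \prod_j \Bbb G_m^{\times 2} = \Bbb G_m^{\times 2n}$ is a torsor for $\prod_j G_j = G$; in particular $G$ acts freely with $\Bbb G_m^{\times 2n}$ as geometric quotient, so $\Bbb C(\accentset{\circ}{\scr B})^G \cong \Bbb C(\Bbb G_m^{\times 2n})$ and $\Bbb G_m^{\times 2n}$ is a rational quotient for the action of $G$ on $\accentset{\circ}{\scr B}$ (hence on $\scr B$, since $\accentset{\circ}{\scr B} \subset \scr B$ is dense open and $G$-invariant).
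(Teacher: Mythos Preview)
Your argument is correct and follows essentially the same route as the paper: trivialize $\mathfrak F_j \to \Bbb G_m^{\times 2}$ after pulling back along the \'etale squaring cover $(s,t)\mapsto (s^2,t^2)$, then take the product over $j$ using Lemma~\ref{lem:scheme-iso}. Your added verification of freeness and fiberwise transitivity via the orthogonal frame $(v,w,[v,w])$ is a welcome elaboration of what the paper leaves implicit.

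One correction to a side remark: $\SO_3 \cong \PGL_2$ is \emph{not} a special group in the sense of Serre---there exist \'etale-locally-trivial $\PGL_2$-torsors over fields that are not Zariski-locally trivial (Brauer--Severi varieties without rational points). So the parenthetical ``every $\SO_3$-torsor over a scheme is Zariski-locally trivial'' is false. This is harmless for your proof, since your ``either way'' already falls back on the \'etale trivialization, which is all that is needed to conclude that $\Bbb G_m^{\times 2}$ is the quotient. (Also, the Lagrange-type identity for $\lVert [v,w]\rVert^2$ in this normalization carries a nonzero scalar factor rather than the $1$ you wrote, but the conclusion $\lVert [v,w]\rVert^2 \ne 0$ is unaffected.)
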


\begin{proof} Fix a pair $(a,b) \in \mathfrak F_j$ with the property that $\lVert a\lVert^2 = 1 = \lVert b\lVert^2$.
Let $U = \Bbb G_m \times \Bbb G_m$ and let
\[
\phi\: U \to \Bbb G_m \times \Bbb G_m
\]
be the map $(r,s) \mapsto  (r^2,s^2)$. Then $\phi$ is a \'etale cover.
Define a map
\[
\tilde \phi\: G_j \times U \to  \mathfrak F_j
\]
by $(A,r,s) \mapsto (rAa,sAb)$. 
Then $\tilde \phi$ defines a trivialization of 
$\mathfrak F_j \to \Bbb G_m \times \Bbb G_m$ along $\phi$.
The proves the first part.

As to the second part, we infer that a rational quotient $\mathfrak F_j/\!\!/G_j$ is given by
$\Bbb G_m \times \Bbb G_m$.  Consequently, using Lemma \ref{lem:scheme-iso}, a rational quotient 
for  the action of $G$ on $\accentset{\circ}{\scr B} $ is given by
\[
\prod_{j=1}^n \mathfrak F_j/\!\!/G_j \cong \prod_{j=1}^n (\Bbb G_m \times \Bbb G_m) = \Bbb G_m^{\times 2n}\, . \qedhere
\] \qedhere
\end{proof}

\begin{proof}[Proof of Theorem \ref{bigthm:main}]
By Lemma \ref{lem:no-name} there is an isomorphism
\[
 \Bbb C(\scr L)^G \cong \Bbb C(\scr F)  \otimes \Bbb C(\scr B)^G ,
\]
where $ \Bbb C(\scr F)$ is purely rational with transcendence degree $4^n - 5n-1$ over $\Bbb C$. By Proposition
\ref{prop:torsor}, $\Bbb C(\scr B)^G \cong \Bbb C(\accentset{\circ}{\scr B}/\!\!/G)$ is purely rational of transcendence degree
$2n$ over $\Bbb C$. Hence the displayed tensor product has transcendence degree $4^n - 3n-1$ over $\Bbb C$.
\end{proof}

\section{An explicit transcendence basis} \label{sec:basis}

Above we described a transcendence basis of $\Bbb C(\scr B)^G$ over $\Bbb C$ consisting of
$2n$ invariants. Using the injective homomorphism
\[
\Bbb C(\scr B)^G \to \Bbb C(\scr L_n)^G
\]
we obtain
 $2n$ algebraically independent elements of $\Bbb C(\scr L_n)^G$. We will show
 how to extend this to a transcendence basis of $ \Bbb C(\scr L_n)^G$ over $\Bbb C$
 by introducing $4^n - 5n -1$ additional invariants.

Let $R = \Bbb C[\scr L_n]$ be the coordinate ring of $\scr L_n$ and
observe that vector space $\scr V_I$ is equipped with
an analytical scalar product.

For each subset $I \subset \overline{n}$,  the canonical projection
\[
a_I\: \scr L_n \to \scr V_I
\]
corresponds to an element 
\[
a_I^*\in \hom_{\Bbb C}(\scr V_I,R) \cong \scr V_I \otimes_{\Bbb C} R
\]
where in the above isomorphism we have implicitly used the analytical scalar product.

First consider the case when $|I| \ge 3$. 
Then the analytical scalar product induces a $G$-invariant pairing
\[
(\scr V_I \otimes_{\Bbb C} R) \otimes (\scr V_I \otimes_{\Bbb C} R)@> \langle {-},{-} \rangle >> R\, .
\]
Consider the functions
\[
a_{j,r}\: \scr L_n \to \scr V_j\, , \qquad j \in \overline{n}, r \in \{0,1,2\}
\]
defined by
\[
a_{j,0}(\rho) :=  \rho_j\, , \quad a_{j,1}(\rho) := [\rho_{jk}\lrcorner \rho_k,\rho_j] \, , \quad  a_{j,2}(\rho) := [\rho_j,a_{j,1}(\rho)]\, ,
\]
where $jk \in \Gamma_1$ is the unique edge that points away from  the vertex $j$.
Given a function $\phi\: I \to \{0,1,2\}$, we may form the function
\[
a_{I,\phi} := \bigotimes_{j \in I} a_{j,\phi(j)} \: \scr L_n\to  \scr V_I\, .
\]
Then $a_{I,\phi}$ corresponds to an element  $a_{I,\phi}^* \in   \scr V_I \otimes_{\Bbb C} R$.

Applying the pairing, we obtain a $G$-invariant element
\[
\theta_{I,\phi} :=  \langle a_I^\ast,a_{I,\phi}^* \rangle \in R\, .
\]
If $I$ is fixed of cardinality $r$, then  there are $3^r$ such invariants corresponding to the number of functions $\phi$.
Hence, if $I \in \binom{\overline{n}}{r}$ is allowed to vary, there are a total $\binom{n}{r}3^r$ such invariants.

If $|I| = 2$ the above construction is to be modified.  
Suppose  $I = \{i , j\}$. 
For $\rho \in \scr L_n$, to avoid clutter we write
$C := \rho_{ij}$. Set
\[
a_j := \rho_j , \quad b_j  := [C\lrcorner a_i,a_j], \quad c_j:= [a_j,b_j]
\]
Then we obtain five invariants associated with $I$:
\begin{itemize}
\item  four invariants $C_{rs}^{tt}$  indexed over $r,s\in \{1,2\}$:
\begin{align*}
C_{11}^{tt} = \langle Cb_i, b_j\rangle\, ,  \quad &C_{12}^{tt} = \langle Cb_i, c_j\rangle\, , \\
 C_{21}^{tt} = \langle Cc_i, b_j\rangle\, , \quad & C_{22}^{tt} = \langle Cc_i, c_j\rangle\, .
\end{align*}
\item one invariant $C^{\ell\ell} :=  \langle Ca_i, a_j\rangle$.
\end{itemize}
Hence, allowing $I$ to vary through subsets of cardinality two, we obtain
a total of $5\binom{n}{2}$ invariants of the above type.

Suppose now that $ij$ is a directed edge of $K_n$ which is not an edge of $\Gamma$.
In this case we specify two invariants associated with  $ij$, namely
\begin{itemize}
\item $C_{11}^{\ell t} := \langle C\lrcorner a_i,b_j \rangle$,
\item $C_{12}^{\ell t} := \langle C\lrcorner a_i,c_j \rangle$,
\end{itemize}
As $\Gamma$ has $n$ edges, there are a total of $2(2\binom{n}{2} - n) = 2n^2-4n$ such invariants.

Then the total number
 of distinct invariants is given by
\[
5\tbinom{n}{2} +  2n^2-4n +  \sum_{j=3}^n \tbinom{n}{j} 3^j = 4^n - 5n-1\, .
\]

\begin{thm} The above invariants, together with the $2n$ invariants $\lVert a_j\lVert^2, \lVert b_j\rVert^2$ for
$j \in \overline{n}$, form a transcendence basis for 
$\Bbb C(\scr L_n)^G$ over $\Bbb C$.
\end{thm}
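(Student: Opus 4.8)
The plan is to leverage Theorem~\ref{bigthm:main}. Since it gives $\operatorname{trdeg}_{\Bbb C}\Bbb C(\scr L_n)^G = 4^n-3n-1$ and the proposed list contains exactly $(4^n-5n-1)+2n = 4^n-3n-1$ functions, each of which is a $G$-invariant regular function on $\scr L_n$ by construction, it suffices to prove that these functions are algebraically independent over $\Bbb C$. Equivalently, writing $\Psi\: \scr L_n \dashrightarrow \Bbb A^{4^n-3n-1}$ for the $G$-invariant rational map whose components are the listed invariants, it suffices to show that $\Psi$ is dominant; as $\Psi$ factors through the quotient $\scr L_n/\!\!/G$, which has dimension $4^n-3n-1$, this is equivalent to dominance of the induced map $\scr L_n/\!\!/G \dashrightarrow \Bbb A^{4^n-3n-1}$. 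I would establish the sharper statement that this induced map is birational, by reconstructing the generic $G$-orbit of an $L$-state from the values of the invariants, following the two-step structure of the first proof.

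The reconstruction proceeds in layers. First, the $2n$ invariants $\lVert\rho_j\rVert^2$ and $\lVert a_{j,1}(\rho)\rVert^2$ are the pullbacks along $p\:\Lcirc_n\to\scr B$ of the $2n$ coordinate functions on $\accentset{\circ}{\scr B}/\!\!/G \cong \Bbb G_m^{\times 2n}$ supplied by Proposition~\ref{prop:torsor}; from their values one recovers, birationally and up to the $G$-action, a point of $\accentset{\circ}{\scr B}$, hence the vectors $\rho_j$, $a_{j,1}(\rho)$, and $a_{j,2}(\rho)=[\rho_j,a_{j,1}(\rho)]$ for every $j$, and with them the $\scr V_i^\ell\otimes\scr V_j^t$ component of $\rho_{ij}$ for every edge $ij$ of $\Gamma$. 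On the locus where all $\lVert\rho_j\rVert^2$ and $\lVert a_{j,1}(\rho)\rVert^2$ are nonzero, the triple $\bigl(\rho_j,a_{j,1}(\rho),a_{j,2}(\rho)\bigr)$ is an orthogonal basis of $\scr V_j$ of nonzero-norm vectors, $\{a_{j,1}(\rho),a_{j,2}(\rho)\}$ being an orthogonal basis of the transversal $\rho_j^{\perp}=\scr V_j^t$. Next, for each pair $\{i,j\}$ the invariant $C^{\ell\ell}$ recovers the $\scr V_i^\ell\otimes\scr V_j^\ell$ component of $\rho_{ij}$, and the invariants $C^{\ell t}_{1s}$ attached to the directed edges of $K_n$ not in $\Gamma$ recover the remaining mixed components $\scr V_i^\ell\otimes\scr V_j^t$ and $\scr V_i^t\otimes\scr V_j^\ell$ (those not already fixed as $\scr B$-data); knowing these, one computes the pair-dependent transversal frames appearing in the $C^{tt}_{rs}$, whence the four invariants $C^{tt}_{rs}$ recover the $\scr V_i^t\otimes\scr V_j^t$ component of $\rho_{ij}$ by inverting an invertible Gram matrix. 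Finally, for $|I|\ge 3$ the invariants $\theta_{I,\phi}=\langle\rho_I,\bigotimes_{j\in I}a_{j,\phi(j)}\rangle$ recover $\rho_I$ by reading off coefficients against the tensor-product bases of the frames already reconstructed. Thus all of $\rho$ is determined up to the $G$-action, so $\Psi$ is dominant and the induced map on quotients is birational; algebraic independence of the $4^n-3n-1$ invariants follows, and since that number equals $\operatorname{trdeg}_{\Bbb C}\Bbb C(\scr L_n)^G$, they form a transcendence basis.

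The step I expect to be the main obstacle is carrying out this reconstruction explicitly and checking that every change-of-frame (Gram) matrix occurring in it is invertible on a dense open set. This reduces to two bookkeeping-heavy but routine verifications. The first is that $\bigl(\rho_j,a_{j,1}(\rho),a_{j,2}(\rho)\bigr)$ is generically an orthogonal basis of nonzero-norm vectors: orthogonality follows from the $\mathfrak{sl}_2$ identities $\langle [A,B],A\rangle=\langle[A,B],B\rangle=0$, and the norm identity $\lVert[A,B]\rVert^2 = c\bigl(\lVert A\rVert^2\lVert B\rVert^2-\langle A,B\rangle^2\bigr)$ for a nonzero constant $c$ gives $\lVert a_{j,2}(\rho)\rVert^2\ne 0$; consequently the relevant Gram matrices are (block) tensor products of the invertible diagonal matrices $\operatorname{diag}\bigl(\lVert\rho_j\rVert^2,\lVert a_{j,1}(\rho)\rVert^2,\lVert a_{j,2}(\rho)\rVert^2\bigr)$, and the invariant values invert uniquely to the corresponding components of $\rho$. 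The second is a careful tally, pair by pair $\{i,j\}$, of which of the summands $\scr V_i^\ell\otimes\scr V_j^\ell$, $\scr V_i^\ell\otimes\scr V_j^t$, $\scr V_i^t\otimes\scr V_j^\ell$, $\scr V_i^t\otimes\scr V_j^t$ of $\scr V_{ij}$ are carried by $\scr B$ (namely those $\scr V_i^\ell\otimes\scr V_j^t$ with $ij$ an edge of $\Gamma$) and which lie in the fiber of $p$, so that the five $|I|=2$ invariants together with the $C^{\ell t}_{1s}$ over the non-$\Gamma$ directed edges exactly coordinatize the latter; this is consistent with the already-verified identity that the total number of new invariants equals the fiber dimension $4^n - 5n - 1$. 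Once both checks are in place, the argument above completes the proof.
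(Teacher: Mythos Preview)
Your proposal is correct and follows essentially the same strategy as the paper: both arguments recognize the listed invariants as the $2n$ base coordinates on $\accentset{\circ}{\scr B}/\!\!/G$ together with $4^n-5n-1$ fiber coordinates for the bundle $p\:\Lcirc_n\to\scr B$, and both reduce to showing that the latter give an equivariant trivialization of the fiber. The paper compresses your layer-by-layer reconstruction into a single observation: each of the $4^n-5n-1$ invariants has the form $\theta_{I,\phi}=\langle a_I,a_{I,\phi}\rangle$ where the $a_{I,\phi}$ are $G$-equivariant sections of $p$ that are pairwise orthogonal and nowhere trivial over $\accentset{\circ}{\scr B}$, hence an equivariant basis of sections; this \emph{is} the trivialization in the No-Name Lemma, so the $\theta_{I,\phi}$ are algebraically independent over $\Bbb C(\scr B)^G$, and adjoining the $2n$ base invariants completes the transcendence basis. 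Your Gram-matrix inversions and component-by-component reconstruction amount to verifying exactly this orthogonal basis property, so the content is the same, only packaged differently.
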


\begin{proof} An inspection of the construction shows that
each of $4^n - 5n-1$ invariants has the form
$\theta_{I,\phi} = \langle  a_I, a_{I,\phi} \rangle$.

In general each function $a_{I,\phi}$ may be interpreted as a $G$-equivariant section of the vector bundle
$\scr L_n \to \scr B$. These sections  are pairwise orthogonal by construction. When restricted to the Zariski open subset
$\accentset{\circ}{\scr B}$, they are nowhere trivial, so they define an equivariant basis of $4^n - 5n-1$ sections. 
This implies that the $4^n - 5n-1$ invariants $\theta_{I,\phi}$ are algebraically independent. It is also clear
that adjoining the remaining $2n$ invariants $\lVert a_j\lVert^2, \lVert b_j\rVert^2$ preserves algebraic independence. The result
follows.
\end{proof}

\begin{ex}[$n=2$] In this case $\scr L_2$ is given by the $L$-states $\rho = (a_1,a_2,C)$ in which $a_i \in \scr V_i$ and $C\in \hom(\scr V_1,\scr V_2)$.
Set
\[
b_1 := [C^ta_2,a_1]\, , \quad c_1 := [a_1,b_1]\, , \quad b_2 := [Ca_1,a_2]\, ,\quad
c_2 := [a_2,b_2]\, .
\]
Define $\breve{\scr L}_2 \subset \scr L_2$ be the Zariski open subset consisting of those $\rho = (a_1,a_2,C)$ such that
\begin{itemize}
\item $\lVert a_1\rVert^2 \ne 0 \ne \lVert a_2\lVert^2$;
\item $\lVert b_1\rVert^2 \ne 0 \ne \lVert b_2\lVert^2$.
\end{itemize}
If $\rho \in \breve{\scr L}_2$
then $\{b_i,c_i\}$ is an orthogonal basis for  the transversal subspace $\scr V_i^t$
and $a_i$ is a basis for the longitudinal subspace $\scr V_i^\ell$.
A transcendence basis for $\Bbb C(\breve{\scr L}_2)^G =  \Bbb C(\scr L_2)^G$ is defined by
the nine invariants
\begin{itemize}
\item $\lVert a_1\rVert^2$, $\lVert a_2\lVert^2$,  $\lVert b_1\rVert^2$,  $\lVert b_2\lVert^2$,
\item $C_{rs}^{tt}$, for $r,s \in \{1,2\}$, where
\begin{align*}
C_{11}^{tt} = \langle Cb_1, b_2\rangle\, ,  \quad &C_{12}^{tt} = \langle Cb_1, c_2\rangle\, , \\
 C_{21}^{tt} = \langle Cc_1, b_2\rangle\, , \quad   & C_{22}^{tt} = \langle Cc_1, c_2\rangle\, .
\end{align*}
\item $C^{\ell\ell} = \langle Ca_1, a_2\rangle$.
\end{itemize}
\end{ex}

\section{Second proof of Theorem \ref{bigthm:main}} \label{sec:second}

The second proof of Theorem  \ref{bigthm:main} relies on the notion of a rational section
for the action of $G$ on $\scr L_n$.

 \subsection{Rational sections} Let $G$ act on a variety $X$.
  The {\it normalizer} of a subvariety $S \subset X$ is the subgroup $N\subset G$ 
defined by $\{g\in G\mid gS\subseteq S\}$. Then {\it centralizer} is the normal subgroup
$C\subset N$ consisting of those elements of $G$ which fixes $S$ pointwise. The {\it Weyl group}
is the quotient $W = N/C$.

  \begin{defn}[{\cite[\S2.8]{PV}}]
    \label{defn:relativeSection}
    We say a subvariety $S\subseteq X$ is a \textit{rational section} (with respect to $G$) if there exists a
   dense open subset $S_0 \subset S$ such that
    %$X_{0}\subseteq X$     \begin{enumerate}
      \begin{enumerate} 
      \item The Zariski closure of the orbit of $S$ is $X$, i.e., $\overline{GS}=X$.
      \item If $g \in G$ and $g S_0\cap S_0\neq\emptyset$, then $g\in N$.
    \end{enumerate}
  \end{defn}
  It is not hard to see if $S\subseteq X$ is a rational section then the restriction $\mathbb{C}[X]^{G}\to\mathbb{C}[S]^{W}$ is injective.
  The latter induces an field homomorphism $\mathbb{C}(X)^{G}\to\mathbb{C}(S)^{W}$.
 In fact, 
 
  \begin{prop}[{\cite[p.~161]{PV}}] \label{prop:sec-iso}
    Let $S\subseteq X$ be a rational section with respect to $G$.
    Then the restriction homomorphism $k(X)^{G}\to k(S)^{W}$ is an isomorphism.    
  \end{prop}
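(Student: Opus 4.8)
Since a homomorphism of fields is automatically injective, the content of the proposition is the \emph{surjectivity} of $k(X)^{G}\to k(S)^{W}$. The plan is to realize both fields as the function fields of rational quotients and to exhibit a birational equivalence between these quotients, built from the two defining properties of a rational section. By Rosenlicht's theorem (see \cite{PV}) there is a $G$-stable dense open $X^{\circ}\subseteq X$ admitting a geometric quotient $\pi\colon X^{\circ}\to Q$, with $k(Q)\cong k(X)^{G}$, whose fibres are exactly the $G$-orbits meeting $X^{\circ}$; likewise there is a $W$-stable dense open $S^{\circ}\subseteq S$ admitting a geometric quotient $q\colon S^{\circ}\to P$, with $k(P)\cong k(S)^{W}$, whose fibres are the $W$-orbits. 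Because $\overline{GS}=X$ while $X\setminus X^{\circ}$ is a proper $G$-stable closed set, $S$ is not contained in it, so $S^{\circ}\cap X^{\circ}$ is a $W$-stable dense open subset of $S^{\circ}$; replacing $S^{\circ}$ by it (the restriction of a geometric quotient to a stable open is again one) we may assume $S^{\circ}\subseteq X^{\circ}$.

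The composite $\pi|_{S^{\circ}}\colon S^{\circ}\to Q$ is a $W$-invariant morphism, since the $W$-action on $S^{\circ}$ is induced by $N\subseteq G$ and $\pi$ is $G$-invariant; as $q$ is a geometric, hence categorical, quotient, $\pi|_{S^{\circ}}$ factors uniquely as $\pi|_{S^{\circ}}=\psi\circ q$ for a morphism $\psi\colon P\to Q$ defined on a dense open. I claim $\psi$ is birational. It is dominant because $\overline{GS^{\circ}}=\overline{GS}=X$, so $GS^{\circ}$ is dense in $X^{\circ}$ and hence $\pi(S^{\circ})=\pi(GS^{\circ})$ is dense in $Q$. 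It is generically injective: if $p_{1},p_{2}$ are general points of $P$ with $\psi(p_{1})=\psi(p_{2})$, choose lifts $s_{1},s_{2}$ in the dense open subset $S^{\circ}\cap S_{0}$ of $S^{\circ}$ (possible for general $p_{i}$, as $q(S^{\circ}\cap S_{0})$ contains a dense open); then $\pi(s_{1})=\psi(q(s_{1}))=\psi(q(s_{2}))=\pi(s_{2})$, so $s_{2}=gs_{1}$ for some $g\in G$, and condition~(2) of a rational section forces $g\in N$. Since $q$ is $W$-invariant, $p_{2}=q(s_{2})=q(gs_{1})=q(s_{1})=p_{1}$. A dominant, generically injective morphism of varieties over an algebraically closed field of characteristic zero is birational, so $\psi^{*}\colon k(Q)\to k(P)$ is an isomorphism; that is, $k(X)^{G}\to k(S)^{W}$ is an isomorphism.

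Finally one checks this isomorphism is the restriction homomorphism: for $f\in k(X)^{G}\cong k(Q)$ one has $(\pi^{*}f)|_{S^{\circ}}=f\circ\pi|_{S^{\circ}}=f\circ\psi\circ q=q^{*}(\psi^{*}f)$, which corresponds to $\psi^{*}f$ under $k(P)\cong k(S)^{W}$. The only substantive ingredient is Rosenlicht's theorem --- it is exactly what turns the orbit condition~(2) into the generic injectivity of $\psi$, and I expect this to be the main point; the rest is formal bookkeeping with geometric quotients. An alternative, more hands-on route is to extend $h\in k(S)^{W}$ to the $G$-invariant function $gs\mapsto h(s)$ on the dense open $GS^{\circ}$; well-definedness again uses condition~(2), and the remaining point, that this function is rational on $X$, once more reduces to Rosenlicht's theorem (or to descent along the dominant morphism $G\times S^{\circ}\to X$).
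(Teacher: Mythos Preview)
The paper does not give its own proof of this proposition: it is stated with a citation to \cite[p.~161]{PV} and is used as a black box in the second proof of Theorem~\ref{bigthm:main}. So there is nothing in the paper to compare your argument against.

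Your proof is correct and is essentially the standard argument (and indeed the one in \cite{PV}): reduce via Rosenlicht's theorem to a morphism $\psi\colon P\to Q$ between geometric quotients, and then read off dominance from condition~(1) and generic injectivity from condition~(2). The bookkeeping you do---shrinking $S^{\circ}$ so that it lies in $X^{\circ}$, noting that $\overline{GS^{\circ}}=X$ because the closure of a $G$-stable set is $G$-stable, and arranging that general fibres of $q$ meet $S_{0}$---is all sound. The final identification of $\psi^{*}$ with the restriction map is a nice touch that is often left implicit. One very minor remark: when you invoke ``a dominant, generically injective morphism of $k$-varieties in characteristic zero is birational,'' this is using that $k(P)/k(Q)$ is a finitely generated extension whose generic fibre is a single (reduced) point, hence of degree~$1$; you might say this explicitly, since it is the one place the characteristic-zero hypothesis is genuinely used.
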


We now turn to the second proof of Theorem  \ref{bigthm:main}. We first describe
a rational section for the action of $G$ on $\scr L_n$. This will depend on a fixed choice
of nowhere zero vector field $\Gamma$ as well as
a fixed choice of one-dimensional subspaces
\[
L_i \subset \scr V_i\, , \quad  i \in \overline{n}
\]
with the property that $L_i$ is spanned
by a vector having non-zero squared norm. 

Having fixed the lines $L_i$, we employ the following change in notation:

\begin{notation} If $\rho \in \scr L_n$ is an L-state such that $\rho_i \in L_i$, 
and $ij\in \Gamma_1$, we set $C_{ij} := \rho_{ij}$ and
\begin{equation} \label{eqn:abc}
a_i := \rho_i\, , \quad b_i := [a_i,C_{ij}\lrcorner a_j] \, , \quad c_i = [a_i,b_i]\, .
\end{equation}
Then $a_i,b_i,c_i$ 
are mutually orthogonal vectors in $\scr V_i$. In particular, they form an orthogonal basis
when $\|a_i \|^2, \|b_i\|^2 \ne 0$.
\end{notation}

If we set
\[
\scr V_i^\ell := \Span(a_i) \, , \quad \scr V_i^{t_-} = \Span(b_i) \, , \quad  \scr V_i^{t_+} = \Span(c_i)\, , \quad \scr V_i^t = \Span(b_i,c_i) \, ,
\]
then $\scr V_i^\ell, \scr V_i^{t_-} , \scr V_i^{t_+}$ are mutually orthogonal
subspaces of  $\scr V_i$ and 
\[
\scr V_i^{t} =\scr  \scr V_i^{t_-} \oplus \scr V_i^{t_+} \, .
\]
Furthermore, if $\|a_i\|^2, \|b_i\|^2\ne 0$, then $\scr V_i^\ell, \scr V_i^{t_-} , \scr V_i^{t_+}$ 
are one-dimensional and span $\scr V_i$.

\begin{defn} \label{defn:rational-sec} With respect to $\Gamma$ and $L_\bullet = (L_1,\dots L_n)$ as above,
let
\[
S := S(\Gamma,L_\bullet)  \subset \scr L_n
\] 
denote the complex vector space consisting of all L-states $\rho$ such that 
\begin{itemize} 
\item  $\rho_i \in L_i$ for all $i \in \overline{n}$, and
\item $\langle [C_{ij}\lrcorner a_i,a_j],c_j \rangle = 0$,   for $ij \in \Gamma_1$,  i.e., $[C_{ij}\lrcorner a_i,a_j] \in \scr V_j^{t_-}$. 
\end{itemize}
In the above definition, we have used the notation of \eqref{eqn:abc}.
\end{defn}

\begin{prop} \label{prop:rational-S} The vector space $S$ is a rational section for the action of $G$ on $\scr L_n$. 
Furthermore, the normalizer of $S$ is $(\Bbb Z/2)^{\times n}$ and the centralizer is trivial.
\end{prop}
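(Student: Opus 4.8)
The plan is to verify Definition \ref{defn:relativeSection} for $S = S(\Gamma, L_\bullet)$ directly, and then to compute the normalizer and centralizer by hand.

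\medskip

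\noindent\textbf{Step 1: counting dimensions and the orbit condition.} First I would compute $\dim S$. An $L$-state $\rho$ in $S$ is constrained by: $\rho_i \in L_i$ (this cuts each of the $n$ one-point components $\scr V_i$ down from dimension $3$ to dimension $1$, losing $2n$ dimensions), and the condition $\langle [C_{ij}\lrcorner a_i, a_j], c_j\rangle = 0$ for each of the $n$ edges of $\Gamma$ (losing $n$ more dimensions). Since $\dim \scr L_n = 4^n - 1$, we get $\dim S = 4^n - 1 - 3n = 4^n - 3n - 1$, which matches the transcendence degree in Theorem \ref{bigthm:main} — a reassuring sanity check and essentially a necessary condition for $S$ to be a rational section with trivial centralizer and finite Weyl group. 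For condition (1), $\overline{GS} = \scr L_n$, I would argue that a generic $\rho \in \Lcirc_n$ can be moved into $S$ by an element of $G$: using $G_i \cong \SO(\scr V_i)$, rotate $\rho_i$ into the line $L_i$ (possible because $\rho_i$ has nonzero squared norm, so it lies on the same $\SO_3(\Bbb C)$-orbit as a generator of $L_i$ — here one uses that the $\SO_3(\Bbb C)$-orbits on $\breve{\scr V}_i$ are exactly the level sets of $\lVert - \rVert^2$), and then use the residual stabilizer of $L_i$ in $G_i$ (a one-parameter group of rotations fixing $a_i$, acting on $\scr V_i^t$) to rotate the vector $[C_{ij}\lrcorner a_i, a_j]$ into the line $\scr V_j^{t_-}$ spanned by $b_j$. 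Since this can be done for each vertex independently (the rotation at $j$ affecting the $t$-subspace of $\scr V_j$ only, and $b_j, c_j$ are built from $a_j$ and the edge data), the $G$-orbit of $S$ is dense, and being a constructible set containing a dense subset of the irreducible variety $\scr L_n$, its closure is all of $\scr L_n$.

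\medskip

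\noindent\textbf{Step 2: condition (2) and the normalizer.} For condition (2) I would first identify $N = \{g \in G : gS \subseteq S\}$. An element $g = (g_1,\dots,g_n) \in \prod_i \SO(\scr V_i)$ preserves the constraint $\rho_i \in L_i$ for generic $\rho \in S$ only if each $g_i$ preserves the line $L_i$; the subgroup of $\SO(\scr V_i) \cong \SO_3(\Bbb C)$ preserving a non-isotropic line is $\Or_2(\Bbb C) = \SO_2(\Bbb C) \rtimes (\Bbb Z/2)$ (the $\SO_2$ rotating the orthogonal plane, the $\Bbb Z/2$ being the reflection that negates the line and fixes the plane — or rather, since we are in $\SO_3$, the element that negates $a_i$ and also negates one transversal direction). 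Then I would impose the second constraint: $g$ must send $[C_{ij}\lrcorner a_i, a_j] \in \scr V_j^{t_-}$ to a vector again in the new $\scr V_j^{t_-}$. Tracking how $b_j = [a_j, C_{ij}\lrcorner a_i]$ transforms under $g_i$ on the $i$-slot and $g_j$ on the $j$-slot, the $\SO_2$ rotation factors act nontrivially and move $b_j$ out of its line unless the rotation angle is trivial, while the $\Bbb Z/2$ factors survive. The upshot should be that $N$ consists exactly of the $2^n$ elements $(\epsilon_1,\dots,\epsilon_n)$ where $\epsilon_i \in \SO(\scr V_i)$ is the order-two element determined by $L_i$ and the transversal decomposition (e.g. $a_i \mapsto -a_i$, $b_i \mapsto -b_i$, $c_i \mapsto c_i$, which lies in $\SO_3$), giving $N \cong (\Bbb Z/2)^{\times n}$. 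Condition (2) then follows: if $gS_0 \cap S_0 \neq \emptyset$ then in particular $g_i L_i \cap L_i \neq \{0\}$ generically forces $g_i L_i = L_i$ and the edge constraints force the $\SO_2$-parts trivial, so $g \in N$.

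\medskip

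\noindent\textbf{Step 3: the centralizer.} Finally, the centralizer $C$ consists of $g \in N$ fixing $S$ pointwise. But a generic $\rho \in S$ has one-point components $\rho_i = a_i$ with $\lVert a_i\rVert^2 \neq 0$, and $\epsilon_i$ as above sends $a_i \mapsto -a_i \neq a_i$; more to the point, acting on the higher components $\rho_I$ for $|I| \geq 2$, the element $(\epsilon_1,\dots,\epsilon_n)$ acts by a nontrivial sign on generic tensors. Hence no nontrivial element of $N$ fixes a generic point of $S$, so $C$ is trivial and $W = N/C \cong (\Bbb Z/2)^{\times n}$. I expect \textbf{Step 2} — pinning down the normalizer exactly, and in particular ruling out the continuous $\SO_2$-rotations by carefully tracking the transformation of $b_j$ through the bilinear/Lie-bracket formula $b_j = [a_j, C_{ij}\lrcorner a_i]$ — to be the main obstacle, since it requires a genuine (though elementary) computation with the $\SO_3(\Bbb C)$-action on $\scr V_i \otimes \scr V_j$ rather than a dimension count.
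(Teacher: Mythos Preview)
Your proposal is correct and follows essentially the same approach as the paper: verify the two conditions of Definition~\ref{defn:relativeSection} directly, determine the normalizer by showing that each $g_i$ must preserve the frame lines $\scr V_i^\ell,\scr V_i^{t_-},\scr V_i^{t_+}$ attached to a generic $\rho\in S_0$, and conclude the centralizer is trivial from the action on the one-point components $a_i$. The paper is simply more compressed at the step you flag as the main obstacle---rather than first reducing to the line-stabilizer $\Or_2\subset\SO(\scr V_i)$ and then ruling out the continuous $\SO_2$-part via the edge constraint, it asserts in one line that $g_i$ fixes all three frame lines and is therefore diagonal of determinant one; your extra dimension count in Step~1 does not appear in this proof but matches the count invoked in the second proof of Theorem~\ref{bigthm:main}.
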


\begin{proof} We first determine the normalizer and centralizer of $S$.
Recall that $G = \prod_i G_i = \prod_i \SO(\scr V_i)$. 
Let $S_0 \subset S$ denote the Zariski open subset consisting of those $\rho\in S$ such that
$\|a_i\|^2, \|b_i\|^2 \ne 0$ for $i \in \overline{n}$,  where we have employed the notation in
 \eqref{eqn:abc}.

If $\rho \in S_0$, then the vectors $a_i,b_i,c_i$ are an orthogonal frame
in $\scr V_i$. Moreover, the action of $G$ on the frame $a_i,b_i,c_i$ is via $G_i$. 
If $g = (g_1,\dots,g_n) \in G$ acts so that $g\cdot \rho \in S$, then
$g_i$ acts by an orthogonal transformation and fixes each of the lines $\scr V_i^\ell, \scr V_i^{t_-} , \scr V_i^{t_+}$.
As these lines span all of $\scr V_i$, we infer that $g_i = \pm I$. Hence the normalizer is $(\Bbb Z/2)^{\times n}$.
It is clear that the only element of the latter group which stabilizes the vectors $a_i$ is the identity element. Hence, the centralizer is trivial.

To see that $S$ is a rational section, let $\rho \in \scr L_n$ be an $L$-state.
 Then there
exists $g_i \in G_i$ such that $g_i \cdot \rho_i \in L_i$.  Hence, we may assume at the outset that
$\rho_i \in L_i$. 

 Suppose $ij\in \Gamma_1$. Then $[C_{ij}\lrcorner a_i,a_j] \in \scr V_j^t$. We may then choose
$h_j \in \Or(\scr V^t_j)  \subset G_j$ so that $h_j\cdot [C_{ij}\lrcorner a_i,a_j] \in \scr V_j^{t_-}$. Setting
$h = (h_1,\dots,h_n)$, it follows that $h \cdot \rho \in S$, so the condition (1) of Definition
\ref{defn:rational-sec} is satisfied.

With $\rho\in S_0$ as above,
suppose that  for some $g\in G$, say $g = (g_1,\dots,g_n)$, there is  $\rho' \in S_0$ satisfying $g\rho = \rho'$.
Then it is straightforward to check that $g_i$ preserves the lines $\scr V_i^\ell, \scr V_i^{t_-} , \scr V_i^{t_+}$ so $g_i = \pm I$. Consequently,
 condition (2)  of Definition
\ref{defn:rational-sec} is satisfied. We conclude that $S$ is a rational section.
\end{proof}

\begin{proof}[Second proof of Theorem \ref{bigthm:main}] We first observe that the vector space
$S$ has dimension $4^n - 3n-1$.
By Propositions \ref{prop:sec-iso} and \ref{prop:rational-S}, it suffices to show that
$\Bbb C(S)^N$ is rational where $N = (\Bbb Z/2)^{\times n}$. The latter immediately 
follows from a result of E.~Fischer 
\cite{Fischer}, \cite[prop.~4.13]{CT-S} which says that for a finite semi-simple abelian group $G$
acting linearly on a finite dimensional complex
vector space
$V$, the field of rational invariants $\Bbb C(V)^G$ is purely rational and has
transcendence degree  $\dim V$.
\end{proof}

% \bib, bibdiv, biblist are defined by the amsrefs package.

%\bibliographystyle{amsalpha}
%\bibliography{entanglement}

\end{document}